   \def\@citecolor{blue}%
   \def\@urlcolor{blue}%
   \def\@linkcolor{blue}%
\def\orcidID#1{\smash{\href{http://orcid.org/#1}{\protect\raisebox{-1.25pt}{\protect\includegraphics{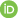}}}}}
\newcommand{\mps}[1]{}
\newcommand{\HT}[1]{{#1}}
\newcommand{\UB}[1]{{#1}}
\newcommand{\NN}{\mathbf{N}}
\newcommand{\D}{\mathbf{D}}
\newcommand{\E}{\mathbf{E}}
\renewcommand{\C}{\mathbf{S}}
\renewcommand{\G}{\mathbf{G}}
\newcommand{\gtos}{\mathsf{gtos}}
\newcommand{\nh}{\mathsf{nh}}
\newcommand{\nnot}{\mathsf{not}}
\newcommand{\onedigit}{\mathsf{onedigit}}
\newcommand{\gscomp}{\mathsf{gscomp}}
\newcommand{\Amb}{\mathbf{Amb}}
\newcommand{\Am}{\mathbf{A}}
\newcommand{\amb}{\mathbf{amb}}
\newcommand{\Fun}{\mathbf{Fun}}
\newcommand{\Left}{\mathbf{Left}}
\newcommand{\Nil}{\mathbf{Nil}}
\newcommand{\Pair}{\mathbf{Pair}} 
\newcommand{\Right}{\mathbf{Right}}
\newcommand{\SD}{\mathbf{SD}}
\newcommand{\Set}{{\mathbf{\downdownarrows}}}
\newcommand{\False}{\mathbf{False}}
\newcommand{\tent}{\mathbf{t}}
\newcommand{\caseof}[1]{\mathbf{case}\, #1\, \mathbf{of}\, }
\newcommand{\botexp}{\mathbf{\bot}}
\newcommand{\leftright}{\mathsf{leftright}}
\newcommand{\mapamb}{\mathsf{mapamb}}
\newcommand{\aup}{\mathsf{up}}
\newcommand{\adown}{\mathsf{down}}
\newcommand{\re}{\mathbf{r}}
\newcommand{\defined}[1]{{#1}\ne \bot}  
\newcommand{\eqdef}{\stackrel{\mathrm{Def}}{=}}
\newcommand{\eqmu}{\stackrel{\mu}{=}}
\newcommand{\eqnu}{\stackrel{\nu}{=}}
\newcommand{\eqmunu}{\stackrel{\munu}{=}}
\newcommand{\eqrec}{\stackrel{\mathrm{rec}}{=}}
\newcommand{\strictapp}[2]{#1{\downarrow}#2}
\newcommand{\ire}[2]{#1\,\mathbf{r}\,#2}
\newcommand{\inl}[1]{\Left(#1)}
\newcommand{\inr}[1]{\Right(#1)}
\newcommand{\all}[1]{\forall #1\,}
\newcommand{\ex}[1]{\exists #1\,}
\newcommand{\IFP}{\mathrm{IFP}}  
\newcommand{\CFP}{\mathrm{CFP}}  
\newcommand{\RIFP}{\mathrm{RIFP}}  
\newcommand{\RCFP}{\mathrm{RCFP}}  
\newcommand{\rec}{\mathbf{rec}}
\newcommand{\tfix}[2]{\mathbf{fix}\,#1\,.\,#2}
\newcommand{\ssp}{\rightsquigarrow}
\newcommand{\newprintp}{\overset{\mathrm{p}}{\ssp}}
\newcommand{\newprintptr}{\mathbin{\stackrel{\mathrm{p}}{\ssp}\kern-.25em{}^*}}
\newcommand{\newprintc}{\overset{\mathrm{c}}{\ssp}}
\newcommand{\valu}[2]{\llbracket{#1}\rrbracket#2}
\newcommand{\val}[1]{\llbracket{#1}\rrbracket}   
\newcommand{\rk}{\mathbf{rk}}
\newcommand{\ddata}{\mathrm{data}}
\newcommand{\rea}{\mathbf{R}}
\newcommand{\reah}{\mathbf{H}}
\newcommand{\reali}[1]{\tilde{#1}}
\newcommand{\seq}{\mathbf{seq}}
\newcommand{\CL}{\mathbf{CL}}
\newcommand{\IND}{\mathbf{IND}}
\newcommand{\COCL}{\mathbf{COCL}}
\newcommand{\COIND}{\mathbf{COIND}}
\newcommand{\mon}{\mathsf{mon}}
\newcommand{\mycomment}[1]{}
\newcommand{\allex}{\Diamond}
\newcommand{\munu}{\Box}
\newcommand{\pcv}[1]{\hat{#1}}
\newcommand{\one}{\mathbf{1}}
\newcommand{\sgb}{\mathrm{sgb}}
\newcommand{\ax}{\mathcal{A}}
\newcommand{\subdom}[1]{#1 \lhd D}
\newcommand{\tval}[2]{D^{#2}_{#1}}
\newcommand{\ftyp}[2]{#1 \Rightarrow #2}
\newcommand{\tri}{\mathbf{3}}
\newcommand{\bool}{\mathbf{2}}
\newcommand{\nat}{\mathbf{nat}}
\newcommand{\stream}[1]{#1^\omega}
\newcommand{\ConSD}{\mathbf{ConSD}}
\newcommand{\conSD}{\mathsf{conSD}}
\newcommand{\rt}[2]{#2|_{#1}}
\newcommand{\rtp}[2]{#2|'_{#1}}
\renewcommand{\HT}[1]{#1}
\renewcommand{\UB}[1]{#1}
\renewcommand{\mps}[1]{} 
\date{}
\begin{document}

\title{Extracting total Amb programs from proofs}         
\author{
Ulrich Berger\inst{1}
\orcidID{0000-0002-7677-3582} \and
Hideki Tsuiki\inst{2}
\orcidID{0000-0003-0854-948X}}



\authorrunning{U. Berger and H. Tsuiki}
%
\institute{Swansea University, Swansea, UK \\ 
\email{u.berger@swansea.ac.uk} 
\and
Kyoto University, Kyoto, Japan \\ 
\email{tsuiki@i.h.kyoto-u.ac.jp}} 

\maketitle              




\begin{abstract}
%
%

We present a logical system CFP (Concurrent Fixed Point Logic) that supports the extraction of nondeterministic and concurrent programs that are provably total and correct. CFP is an intuitionistic first-order logic with inductive and coinductive definitions extended by two propositional operators, $\rt{A}{B}$ (restriction, a strengthening of implication) and $\Set(B)$ (total concurrency). The source of the extraction are formal CFP proofs, the target is a lambda calculus with constructors and recursion extended by a constructor Amb (for McCarthy's amb) which is interpreted operationally as globally angelic choice and is used to implement nondeterminism and concurrency. The correctness of extracted programs is proven via an intermediate domain-theoretic denotational semantics. We demonstrate the usefulness of our system by extracting a nondeterministic program that translates infinite Gray code into the signed digit representation. A noteworthy feature of our system is that the proof rules for restriction and concurrency involve variants of the classical law of excluded middle that would not be interpretable computationally without Amb. 

\end{abstract}

\section{Introduction}
\label{sec-introduction}

Nondeterministic bottom-avoiding choice is an important and useful idea. 
With the wide-spread use of hardware that supports parallel computation,
it has the possibility to speed up practical computation and, at the same time,
it is related to computation over mathematical structures like real 
numbers~\cite{Escardo96,Tsuiki02}.
On the other hand, it is not easy to apply 
theoretical tools like denotational semantics to nondeterministic bottom-avoiding choice~\cite{HughesO89,Levy07} and guaranteeing 
correctness and totality of such programs through logical systems is a difficult task.
%

To explain the subtleness of the problem, let us start with an example.
Suppose that $M$ and $N$ are partial programs that,  
under the conditions $A$ and $\neg A$, respectively, 
are guaranteed to terminate and produce values satisfying specification $B$. 
Then, by executing $M$ and $N$ in parallel 
and taking the result obtained first, we should always obtain a result satisfying $B$. 
This kind of bottom-avoiding nondeterministic program
is known as \emph{McCarthy's amb (ambiguous) operator} \cite{McCarthy1963}, and 
we denote such a program by  $\Amb(M, N)$.
$\Amb$ is called the angelic choice operator and
is usually studied as one of the three 
nondeterministic choice operators (the other two are erratic choice and demonic choice).
On the other hand, 
we are interested in this operator not only from a theoretical 
point of view but also from the way it behaves as a
concurrent program running on a parallel execution mechanism.  

If one tries to formalize this idea naively, one will face some 
obstacles.  
Let $\ire{M}{B}$ (``$M$ realizes $B$'') denote the fact 
that a program $M$ satisfies a specification $B$ and 
let $\Set(B)$ be the specification that can be satisfied by 
a concurrent program of the form $\Amb(M, N)$ that  always terminates and 
produces a value satisfying $B$.  
Then, the above inference could be written as
\[
  \infer[\hbox{}]{
  \ire{\Amb(M, N)} \Set(B)
}{
A \to (\ire{M}{B})  \ \ \ \     \neg A \to (\ire{N}{B}) 
}
\]
However, this inference is not sound for the following reason.
Suppose that $A$ does not hold, that is, $\neg A$ holds. 
Then, the execution of $N$ will produce a
value 
satisfying $B$. But the execution of $M$ may terminate as well, and
with a data that does not satisfy $B$ since there is no condition on $M$
if $A$ does not hold.
Therefore,  if $M$ terminates first in the execution of $\Amb(M, N)$, 
then we obtain a result that may not satisfy $B$.

To amend this problem, we add a new operator $\rt{A}{B}$ 
(pronounced ``$B$ restricted to $A$'')
and consider the rule
\begin{equation}\label{eq0}
  \infer[\hbox{}]{
  \ire{\Amb(M, N)} \Set(B)
}{
\ire{M}{(\rt{A}{B})}\ \ \ \  \ire{N}{(\rt{\neg A}{B})}}
\end{equation}

Intuitively, $\ire{M}{(\rt{A}{B})}$ means  two things:
(1) $M$ terminates if $A$ holds, and
(2) if 
$M$ terminates, then the result satisfies $B$
even for the case $A$ does not hold.
As we will see in Sect.~\ref{sub-conc}, 
the above rule is derivable in classical logic
and can therefore be used to prove total correctness of Amb programs.

In this paper, we go 
a step further and
introduce a logical system $\CFP$ whose formulas
can be interpreted as specifications of 
nondeterministic programs
although they do not talk about programs explicitly.
$\CFP$ is defined by adding the two logical operators $\rt{A}{B}$ and 
$\Set(B)$ to the system $\IFP$, 
a logic for program extraction~\cite{IFP} 
(see also \cite{Berger11,SeisenBerger12,BergerPetrovska18}). 
A related approach has been developed in the
 proof system Minlog~\cite{SchwichtenbergMinlog06,BergerMiyamotoSchwichtenbergSeisenberger11,SchwichtenbergWainer12}.
$\IFP$ supports the extraction of lazy functional programs 
from inductive/coinductive proofs in intuitionistic first-order logic.
It has a prototype implementation in Haskell,
called Prawf \cite{DBLP:conf/cie/0001PT20}.

We show that from a $\CFP$-proof of a formula, both
a program and a proof that the program
satisfies the specification can be extracted
(Soundness theorem, Theorem \ref{thm-soundnessI}). 
For example, in $\CFP$ we have the rule 
 \begin{equation}\label{eq00}
   \infer[\hbox{(Conc-lem)}]{
   \Set(B)
 }{
 \rt{A}{B}\ \ \ \    \rt{\neg A}{B}}
  \end{equation}
which is realized by the program 
  $\lambda a. \lambda b. \Amb(a, b)$,
  and whose correctness is expressed by the rule (\ref{eq0}).
Programs extracted from $\CFP$ proofs can be  
executed in Haskell,
implementing $\Amb$
with  
the concurrent Haskell package.

Compared with program verification, the
extraction approach has the benefit that 
(a) the proofs 
programs are extracted from 
take place in a formal system
that is of a very high level of abstraction and therefore is simpler and
easier to use than a logic that formalizes concurrent programs
(in particular, programs do not have to be written manually at all);
(b) not only the complete extracted program is proven correct but also
all its sub-programs come with their specifications and correctness proofs
since these correspond to sub-proofs. This makes it easier to locally
modify programs without the danger of compromising overall correctness.

As an application,
we extract a nondeterministic 
program that converts 
infinite Gray code to signed digit representation, where 
infinite Gray code is a 
coding of real numbers by partial digit streams
that are allowed to contain a $\bot$, that is, a digit
whose computation does not terminate~\cite{Gianantonio99,Tsuiki02}. 
Partiality and multi-valuedness are common phenomena in computable analysis 
and exact real number computation~\cite{Weihrauch00,LUCKHARDT1977321}.
This case study connects these two aspects through a nondeterministic and
concurrent program whose correctness is guaranteed by a CFP-proof. 
\HT{The extracted Haskell programs are available in the repository~\cite{githubUB}
and are described in Appendix~\ref{Sec:appendix-program}.}


Organization of the paper: 
In Sects.~\ref{sec-ang} and~\ref{sec-ops}
we present the denotational and operational semantics of a functional 
language with $\Amb$ and prove that they match 
(Thms.~\ref{thm:data} and~\ref{thm:dataconv}).
Sects.~\ref{sec-cfp} and~\ref{sec-pe} describe the formal system $\CFP$ 
and its realizability interpretation 
which our program extraction method is based on 
(Thms.~\ref{thm-soundnessI} and~\ref{thm-soundnessII}).
In Sect.~\ref{sec-gray} we extract
a concurrent program that converts 
representation of real numbers 
and study its behaviour in Sect.~\ref{sec-experiments}.
%
%
%
%
\HT{Most proofs, unless very short, are omitted do to space limitation.
Full proofs of the main results can be found in the 
Appendix~\ref{appendix-proofs}. 
}

\section{Denotational semantics of globally angelic choice }
\label{sec-ang}

In \cite{McCarthy1963}, 
McCarthy defined 
the ambiguity operator $\amb$ as
\[
\amb(x, y) = \left\{ \begin{array}{ll} x  &(x \ne \bot)\\
                       y & (y \ne \bot) \\
                       \bot & (x =  y = \bot)
                              \end{array}\right.
\]                              
where $\bot$ means `undefined' and $x$ and $y$ are taken nondeterministically 
when both $x$ and $y$ are not $\bot$.
This is called \emph{locally} angelic nondeterministic choice  
since convergence is chosen over divergence for each local call for 
the computation of $\amb(x, y)$.
It can be implemented by executing both of the arguments
in parallel and taking the result obtained first.  
Despite being a simple construction,
$\amb$ is known to have a lot of expressive power, 
and many constructions of nondeterministic and parallel computation 
such as erratic choice, countable choice (random assignment), 
and `parallel or' can be encoded through it \cite{LassenMoran99}.
These multifarious aspects of the operator $\amb$ are reflected by 
the difficulty of its mathematical treatment in denotational semantics.
For example, 
$\amb$ is not monotonic when interpreted over powerdomains
with the Egli-Milner order~\cite{Broy1986}.

On the other hand, one can consider an interpretation of $\amb$ as  
\emph{globally} angelic choice, where  an argument of $\amb$ 
is chosen so that the whole ambient computation converges, 
if convergence 
is possible at all~\cite{ClingerHalpern85,SondergardSestoft92}.
Since globally angelic choice is not defined compositionally,  
it is not easy to integrate it into a design of a programming 
language with clear denotational semantics.
However, 
it
can be easily implemented by 
running the whole computation for both
of the arguments of $\amb$ in parallel and taking the 
result obtained first.
%
%
Denotationally, globally angelic choice 
can be modelled by
the Hoare powerdomain construction. 
However, this would not be suitable for analyzing total correctness
  because the ordering of the Hoare powerdomain does 
not discriminate $X$ and $X \cup \{\bot\}$~\cite{HughesM92,HughesO89}.
Instead, we consider a 
two-staged approach (see~Sect.~\ref{sub-denot}).  
%

The difference between 
the locally and the globally angelic interpretation
%
of $\amb$ 
is highlighted
by the fact that the 
former 
does not commute with function application.
For example, if 
$f(0) = 0$ but $f(1)$ diverges,
then $\amb(f(0), f(1))$ will always terminate with the value $0$, whereas
$f(\amb(0, 1))$ may return 0 or diverge.
On the other hand, 
the latter term will always return $0$ 
if $\amb$ is implemented with a globally angelic semantics.
As suggested in \cite{ClingerHalpern85}, 
we use this commutation property to realize the globally angelic semantics.

\subsection{Programs and types}
\label{Sec:2.1}
%
%
Our target language for program extraction
is an untyped lambda calculus with recursion operator and
constructors as in~\cite{IFP}, but extended by an additional 
constructor $\Amb$ that corresponds to globally angelic version
of McCarthy's $\amb$.
This could be easily generalized to an $\Amb$ operator of any arity $\ge 2$.
%
%
%
%
\label{sub-prog}
%
%
%
%
\begin{align*}
&  \mathit{Programs} \owns M,N,L,P, Q,R :: = 
  a, b, \ldots, f, g \ \ \text{(program variables)}\\
&\quad  |\  \lambda a.\,M 
\ | \ M\,N 
\ | \ \strictapp{M}{N} 
\ | \ \rec\,M \ | \  \botexp \\
&\quad |\    \Nil\ | \ \Left(M)\ | \ \Right(M)\ | \ \Pair(M,N) |\ \Amb(M,N)\\
&\quad |\  \caseof{M}\{\Left(a) \to L; \Right(b) \to R\} \\ 
&\quad |\  \caseof{M}\{\Pair(a,b) \to N\} \\
&\quad |\  \caseof{M}\{\Amb(a,b) \to N\}  
\end{align*}
%
%
Denotationally, $\Amb$ is just another pairing operator. 
Its interpretation as globally angelic choice 
will come to effect only through its operational semantics.
Though essentially a call-by-name language, 
it also has strict application $\strictapp{M}{N}$, 
needed for realizing the rules for restriction and the concurrency
operator.
%

We use $a, \ldots, g$ for program variables to distinguish them from the variables
$x, y, z$ of 
the logical system 
CFP (Sect.~\ref{sec-cfp}). 
%
%
%
$\Nil, \Left, \Right, \Pair, \Amb$ are called \emph{constructors}.
Constructors different from $\Amb$ are called \emph{data constructors}.
$\mathrm{C_d}$ denotes the set of data constructors.
$\strictapp{\Left}{M}$ stands for $\strictapp{(\lambda a.\Left(a))}{M}$, etc.,
and we sometimes write $\Left$ and $\Right$ for $\Left(\Nil)$ and 
$\Right(\Nil)$.  
%
Natural numbers are
 encoded as $0 \eqdef \Left$,  $1 \eqdef \Right(\Left)$, and so on.

%
Although programs are untyped, programs extracted from
proofs will be typable by 
the following system of 
simple recursive types:
\[
Types \ni \rho, \sigma ::=  \alpha\ (\hbox{type variables})
                         \mid \one  
                         \mid \rho \times \sigma
                         \mid \rho + \sigma
                         \mid \ftyp{\rho}{\sigma}
                         \mid \tfix{\alpha}{\rho}
                         \mid \Am(\rho) 
\]
%
%
%
Here, $\Am(\rho)$ is the type of
programs which, if they terminate (see Sect.~\ref{sec-ops}), 
reduce to a form
$\Amb(M, N)$ with $M,N\!:\!\rho$.
The formation of $\tfix{\alpha}{\rho}$ has the side conditions
that 
%
$\alpha$ occurs freely in $\rho$,
$\rho$ is strictly positive in $\alpha$ (that is, there is no free 
occurrence of $\alpha$ in $\rho$ which is in the left part of a function type), 
and not of the form $\alpha$ or $\Am(\alpha)$.
These conditions ensure, among other things, 
that the type transformer $\alpha \mapsto \rho$
has a unique fixed point, which is taken as the semantics of $\tfix{\alpha}{\rho}$ 
 (see below).
We require in $\Am(\rho)$ that $\rho$ \UB{is neither a variable nor of} 
the form 
$\tfix{\alpha_1}{\ldots{\tfix{\alpha_n}{\Am(\rho')}}}$ ($n \ge 0$).
\mps{\HT{I wonder if we need it.}
\UB{I think it's nicer with the extra restriction to have closure under subterms. Wording slightly changed.}}
This enables the interpretation of $\Amb$ as a bottom-avoiding choice operator
(see the explanation below Corollary~\ref{cor:ddatabot}).
We call types that satisfy all these conditions  \emph{regular}.
%
An example of a regular type is 
the type of lazy (partial) natural numbers, 
$\nat \eqdef \tfix{\alpha}{\one+\alpha}$.

\begin{figure}
\fbox{
\begin{minipage}{\textwidth}
\begin{center}
$\Gamma, a:\rho \vdash a:\rho$
\hspace{3em} 
$\Gamma \vdash \Nil:\one$
\hspace{3em}
$\Gamma \vdash \bot:\rho$
\hspace{3em}
\\[0.5em]
\AxiomC{$\Gamma\vdash M:\rho$}
             \UnaryInfC{$\Gamma \vdash \Left(M) : \rho + \sigma$}
            \DisplayProof 
\hspace{3em} 
\AxiomC{$\Gamma\vdash M:\sigma$}
             \UnaryInfC{$\Gamma \vdash \Right(M) : \rho + \sigma$}
            \DisplayProof \ \ \ \ 
\\[0.5em]
\AxiomC{$\Gamma\vdash M:\rho$}
\AxiomC{$\Gamma\vdash N:\sigma$}
             \BinaryInfC{$\Gamma \vdash \Pair(M,N) : \rho\times\sigma$}
            \DisplayProof 
\ \ \ \ \ \ \ \ 
\AxiomC{$\Gamma\vdash M:\rho$}
\AxiomC{$\Gamma\vdash N:\rho$}
             \BinaryInfC{$\Gamma \vdash \Amb(M,N) : \Am(\rho)$}
            \DisplayProof 
\\[0.5em]
\AxiomC{$\Gamma, a:\rho\vdash M:\sigma$}
             \UnaryInfC{$\Gamma \vdash \lambda a.\,M : \ftyp{\rho}{\sigma}$}
            \DisplayProof 
\ \ \ \ \ \ \ \ \ \ 
\AxiomC{$\Gamma, a:\rho\vdash M\,a:\rho$}
             \UnaryInfC{$\Gamma \vdash \rec\,M : \rho$}
            \DisplayProof 
{($a$ not free in $M$)}
\\[0.5em]
\AxiomC{$\Gamma\vdash M:\ftyp{\rho}{\sigma}$}
\AxiomC{$\Gamma\vdash N:\rho$}
             \BinaryInfC{$\Gamma \vdash M\,N : \sigma$}
            \DisplayProof \ \ \ \ 
\hspace{3em} 
\AxiomC{$\Gamma\vdash M:\ftyp{\rho}{\sigma}$}
\AxiomC{$\Gamma\vdash N:\rho$}
             \BinaryInfC{$\Gamma \vdash \strictapp{M}{N} : \sigma$}
            \DisplayProof \ \ \ \ 
\\[0.5em]
\AxiomC{$\Gamma \vdash M : \rho[\tfix{\alpha}{\rho}/\alpha]$}
\RightLabel{{}}
             \UnaryInfC{$\Gamma \vdash M : \tfix{\alpha}{\rho}$}
            \DisplayProof 
\hspace{3em} 
\AxiomC{$\Gamma \vdash M : \tfix{\alpha}{\rho}$}
\RightLabel{{}}
             \UnaryInfC{$\Gamma \vdash M : \rho[\tfix{\alpha}{\rho}/\alpha]$}
            \DisplayProof 
\\[1em]
%
\AxiomC{$\Gamma \vdash M:\rho + \sigma$\ \ \ \ 
$\Gamma, a:\rho \vdash L:\tau$\ \ \ \ 
$\Gamma, b:\sigma \vdash R:\tau$}
\UnaryInfC{$\Gamma\vdash \caseof{M} \{\Left(a) \to L; \Right(b) \to R \} :\tau$}
            \DisplayProof \ \ \ \ 
\\[1em]
\AxiomC{$\Gamma \vdash M:\rho\times\sigma$\ \ \ \ 
$\Gamma, a:\rho, b:\sigma \vdash N:\tau$}
\UnaryInfC{$\Gamma\vdash \caseof{M} \{\Pair(a, b) \to N\} :\tau$}
            \DisplayProof 
\ \ 
\AxiomC{$\Gamma \vdash M:\Am(\rho)$\ \ \ 
$\Gamma, a,b:\rho \vdash N:\tau$}
\UnaryInfC{$\Gamma\vdash \caseof{M} \{\Amb(a, b) \to N\} :\tau$}
            \DisplayProof 
%
%
\end{center}
\end{minipage}
}
\caption{Typing rules}
\label{fig-typing}
\end{figure}

The typing rules are listed in Fig.~\ref{fig-typing}.
They are valid w.r.t.~the denotational semantics given in Sect.~\ref{sub-denot} 
and extend the rules given in~\cite{IFP}. 
%
Recursive types are equirecursive \cite{Pierce:2002} in that
$M : \tfix{\alpha}{\rho}$ iff $M : \rho[\tfix{\alpha}{\rho}/\alpha]$.
%

As an example of a program consider
\begin{equation}
\label{eq-f}
f\eqdef\lambda a.\caseof{a} \{\Left(\_)\to\Left;
                                    \Right(\_)\to \bot\}
\end{equation}
which implements the function $f$ discussed earlier, i.e.,
$f\,0 = 0$ and $f\,1 = \bot$.
$f$ has type $\ftyp{\nat}{\nat}$. 
Since 
$\Amb(0,1)$ has type $\Am(\nat)$,
the application $f\,\Amb(0,1)$ 
is not well-typed. 
Instead, we consider
$
\mapamb\ f\ \Amb(0, 1)
$
where 
$\mapamb : (\rho \to \sigma) \to \Am(\rho) \to \Am(\sigma)$ is defined as
\begin{eqnarray*}
\mapamb\eqdef \ \lambda f.\, \lambda c.\ \caseof{c}\,
\{\Amb(a,b) \to 
 \Amb(\strictapp{f}{a}, \strictapp{f}{b})\}
\end{eqnarray*}
%
%
%
%

This operator realizes the globally angelic semantics:
$\mapamb\ f\ \Amb(0, 1)$ is reduced to $\Amb(\strictapp{f}{0}, \strictapp{f}{1})$,
 and $\strictapp{f}{0}$ and $\strictapp{f}{1}$ (which are
the same as $f\ 0$ and $f\ 1$ since $0$ and $1$ are defined) 
are computed concurrently and 
the whole expression
is reduced to 0, using 
the operational semantics in Section \ref{sec-ops}.
In Sect.~\ref{sec-pe}, we will introduce a concurrent (or nondeterministic) 
version of Modus Ponens, (Conc-mp), 
which will automatically generate an application of $\mapamb$.
\subsection{Denotational semantics}
\label{sub-denot}
%
%
%
%
%
The denotational semantics has two phases: 
\emph{Phase~I} interprets programs in a Scott domain $D$
defined by the 
following
recursive domain equation
%
\[
D = (\Nil + \Left(D) + \Right(D) + \Pair(D\times D)
    + \Amb(D\times D) + \Fun(D\to D))_\bot \,.
\]
where $+$ and $\times$ denote separated sum and cartesian product, 
and the operation $\cdot_\bot$ adds a least element 
$\bot$~(\cite{GierzHofmannKeimelLawsonMisloveScott03} is a recommended 
reference for domain theory and the solution of domain equations).
%
%
%
%
%
A closed program $M$ denotes an element $\val{M}\in D$
as defined in Fig.~\ref{fig-semantics}.
Note that $\Amb$ is interpreted (like $\Pair$) as a simple pairing operator.

A type is interpreted as a subdomain, which is 
a subset of $D$
that is downward closed and closed under suprema of bounded subsets.
We use 
the following operations on subdomains: 
\begin{eqnarray*}%
(X+Y)_\bot &\eqdef& \{\Left(a)\mid a\in X\} \cup 
                           \{\Right(b) \mid b \in Y\} \cup \{\bot\}\\
(X\times Y)_\bot &\eqdef& \{\Pair(a,b) \mid a\in X, b\in Y \} \cup \{\bot\}\\
(\ftyp{X}{Y})_\bot &\eqdef& \{\Fun(f) \mid f:D\to D\hbox{ continuous, }
                           \forall a\in X (f(a) \in Y) \} \cup \{\bot\}.
\end{eqnarray*}
Through the semantics in Fig.~\ref{fig-semantics}, 
closed programs denote elements of $D$ and closed types denote subdomains of $D$
such that the typing rules (Fig.~\ref{fig-typing}) are sound.  

\begin{figure}
\fbox{\small
\begin{minipage}{\textwidth}
\vspace*{-0.2cm}
\begin{eqnarray*}
\valu{a}{\eta} &=& \eta(a)\\
\valu{\lambda a.\,M}{\eta} &=& \Fun(f)\quad  
  \hbox{where $f(d) = \valu{M}{\eta[a\mapsto d]}$}\\
\valu{M\,N}{\eta} &=& f(\valu{N}{\eta})\quad 
                        \hbox{if $\valu{M}{\eta}= \Fun(f)$}\\
\valu{\strictapp{M}{N}}{\eta} &=& f(\valu{N}{\eta})\quad 
                        \hbox{if $\valu{M}{\eta}= \Fun(f)$ and $\val{N}\neq\bot$}\\
\valu{\rec\,M}{\eta} &=& \hbox{the least fixed point of $f$ if $\valu{M}{\eta}=\Fun(f)$}\\
\valu{C(M_1,\ldots,M_k)}{\eta} &=& C(\valu{M_1}{\eta},\ldots,
                                       \valu{M_k}{\eta})
\quad \hbox{($C$ a constructor (including $\Amb$))}\\
\valu{\caseof{M}{\vec{Cl}\}}}{\eta} &=& 
       \valu{K}{\eta[\vec a \mapsto \vec d]} 
       \quad \hbox{if $\valu{M}{\eta} = C(\vec d)$ and $C(\vec a) \to K\in\vec{Cl}$}\\
%
%
\valu{M}{\eta} &=& \bot\ \ 
\hbox{in all other cases, 
in particular $\valu{\botexp}{\eta} = \bot$}
\end{eqnarray*}

\vspace*{-0.2cm}  

\hspace{1cm}${\eta}$ is an environment that assigns elements of $D$ to variables.
%
%
\begin{eqnarray*}
%
\tval{\alpha}{\zeta} &=& \zeta(\alpha),\qquad
\tval{\one}{\zeta} \ =  \ \{\Nil,\bot\},\\
\tval{\tfix{\alpha}{\rho}}{\zeta} &=& 
\bigcap\{\subdom{X} \mid \tval{\rho}{\zeta[\alpha \mapsto X]} \subseteq X \} 
\text{\qquad ($\subdom{X}$ means $X$ is a subdomain of $D$)}\\
\tval{\Am(\rho)}{\zeta} &=&
  \{\Amb(a, b) \mid a, b \in \tval{\rho}{\zeta}\} \cup \{\bot\}\\
\tval{\rho\diamond\sigma}{\zeta} &=& 
(\tval{\rho}{\zeta}\diamond\tval{\sigma}{\zeta})_{\bot}\quad 
(\diamond\in\{+,\times,\ftyp{}{}\})   
\end{eqnarray*}                           

\vspace*{-0.2cm}  

\hspace*{1cm}
${\zeta}$ is a type environment that assigns subdomains $D$ to type variables.
\end{minipage}
}
\caption{Denotational semantics of programs (Phase I) and types}
\label{fig-semantics}
\end{figure}

%
In \emph{Phase~II} we assign to every $a\in D$
a set $\ddata(a)\subseteq D$ 
that reveals the role of $\Amb$ as a choice operator. 
The relation `$d\in\ddata(a)$' 
is defined (coinductively) as the largest relation satisfying
\begin{align*}
d \in \ddata(a) \quad \eqnu \quad 
&(a = \Amb(a', b') \land a' \ne \bot \land d \in \ddata(a')) \ \lor \\
&(a = \Amb(a', b') \land b' \ne \bot \land  d \in \ddata(b')) \ \lor \\
&(a = \Amb(\bot, \bot) \land d = \bot) \ \lor \\  
&\bigvee_{C \in \mathrm{C_d}}
\left(a =  C(\vec{a'}) \land d = C(\vec{d'}) \land \bigwedge_i d'_i \in \ddata(a'_i)\right) \ \lor\\
&(a = \Fun(f)  \land d = a ) \ \lor (a = d = \bot)\,.
\end{align*} 
%
%
%
Now, every closed program $M$ 
denotes the set 
$\ddata(\val{M}) \subseteq D$ containing all possible globally 
angelic choices derived form its denotation in $D$.
For example, $\ddata(\Amb(0,1)) = \{0,1\}$ and, for $f$ as
defined in~(\ref{eq-f}), we have, as expected,
$\ddata(\mapamb\ f\ \Amb(0, 1)) = \ddata(\Amb(0,\bot)) = \{0\}$. 
%
In Sect.~\ref{sec-ops} we will define an operational semantics whose fair
execution sequences starting with a regular-typed program 
$M$ compute exactly the elements in $\ddata(\val{M})$.

\begin{example}\label{ex:random}
  Let $M = \rec\ \lambda a. \Amb(\Left(\Nil), \Right(a))$. $M$ is a closed program
of type
  $\tfix \alpha \Am(\one + \alpha)$.
  We have $\ddata(M) = \{0, 1, 2, \ldots\}$.
Thus, we can express countable choice (random assignment) with $\Amb$.
\end{example}

\begin{lemma}\label{lem:ddatabot}
If $a \in D$ belongs to a regular type, then the following are equivalent:
  (1) $a \in\{ \bot,\Amb(\bot, \bot)\}$;   
  (2) $\{\bot\} = \ddata(a)$;
  (3) $\bot \in \ddata(a)$.
\end{lemma}

\section{Operational semantics}
\label{sec-ops}
%
We define a small-step operational semantics 
that, in the limit, reduces each closed program $M$ nondeterministically
to an element in $\ddata(\val{M})$ (Thm.~\ref{thm:data}). 
If $M$ has a regular type, the converse holds as well: 
For every $d\in\ddata(\val{M})$ there exists
a reduction sequence for $M$ computing $d$ in the limit 
(Thm.~\ref{thm:dataconv}).
If $M$ denotes a compact data, then the limit is obtained 
after finitely many reductions.
%
%
%
In the following, all programs are assumed to be closed.

\subsection{Reduction to weak head normal form}
A program is called a \emph{weak head normal form (w.h.n.f.)} if it
 begins with a constructor 
(including $\Amb$), 
or has the form $\lambda a. M$.
We define inductively 
a 
small-step leftmost-outermost reduction 
relation $\ssp$ on 
programs
where $C$ ranges over constructors. 
%

\medbreak
\begin{itemize}
\setlength{\itemsep}{0.2cm}
\setlength{\itemindent}{0.4cm}
  %
\item[(s-i)] $(\lambda a.\,M)\ N \ssp M[N/a]$
%
\item[(s-ii)] \AxiomC{$M \ssp M'$}
            \UnaryInfC{$M\,N \ssp M'\,N$}
            \DisplayProof 
%
\item[(s-iii)]  $\strictapp{(\lambda a.\,M)}{N} \ssp M[N/a]$ \quad 
if $N$ is a w.h.n.f.
%
\item[(s-iv)] \AxiomC{$M \ssp M'$}
            \UnaryInfC{$\strictapp{M}{N} \ssp \strictapp{M'}{N}$}   
            \DisplayProof  \quad  
if $N$ is a w.h.n.f.
%
\item[(s-v)]  \AxiomC{$N \ssp N'$}
            \UnaryInfC{$\strictapp{M}{N} \ssp \strictapp{M}{N'}$}
            \DisplayProof 
%
\item[(s-vi)] $\rec\,M \ssp M\,(\rec\,M)$
%
\item[(s-vii)] $\caseof{C(\vec M)} \{\ldots;C(\vec b)\to N;\ldots\}\ssp  N[\vec M/\vec b]$
%
\item[(s-viii)] \AxiomC{$M \ssp M'$}
             \UnaryInfC{$\caseof{M}\{\vec{Cl}\}\ssp 
               \caseof{M'}\{\vec{Cl}\}$}
            \DisplayProof 
\item[(s-ix)] $M \ssp \bot$ \quad if $M$ is $\bot$-like (see below)
\end{itemize}
$\bot$-like programs are such that their syntactic 
forms immediately imply that
they denote $\bot$, more precisely they are of the form  $\bot$,
$C(\vec M)\,N$, $\strictapp{C(\vec M)}{N}$, 
and $\caseof{M}\,\{\ldots\}$ where $M$ is a lambda-abstraction or
of the form $C(\vec M)$ such that there is no clause in $\{\ldots\}$ 
which is of the form $C(\vec a) \to N$. 
%
W.h.n.f.s 
are never $\bot$-like, and the only typeable $\bot$-like
program is $\bot$.
%

\begin{lemma}\label{lem:ssp}
\begin{enumerate}
\item[(1)] $\ssp$ is deterministic (i.e., $M \ssp M'$ for at most one $M'$).
\item[(2)] $\ssp$ preserves the denotational semantics 
(i.e., 
$\val{M} = \val{M'}$ 
if $M \ssp M'$).
\item[(3)] $M$ is a $\ssp$-normal form iff $M$ is a  
w.h.n.f.  
\item[(4)] [Adequacy Lemma]\label{lem:ade}
  If $\val{M} \ne \bot$, then there is a 
w.h.n.f.~$V$ 
s.t.\ $M \ssp^* V$.
\end{enumerate}
\end{lemma}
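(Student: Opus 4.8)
The plan is to prove the four parts in the order (3), (1), (2), (4); the first three are routine structural or rule inductions, while (4) carries all the work. Throughout, $\Amb$ is treated exactly like a binary data constructor, since with respect to the Phase~I denotation $\val{\cdot}$ it behaves identically to $\Pair$ (clause~(\ref{eq-Amb})). For (3), the direction ``w.h.n.f.\ $\Rightarrow$ $\ssp$-normal form'' is immediate from inspecting the left-hand sides of (s-i)--(s-vii): a term $C(\vec M)$ (or $\Amb(\vec M)$) matches a left-hand side only when further applied, and $\lambda x.M$ reduces only inside an application via (s-ii), so neither steps on its own. The converse, ``closed non-w.h.n.f.\ $\Rightarrow$ reducible'', is a case analysis on $M$: variables are excluded as $M$ is closed; $\rec M$ and $\botexp$ step by (s-iii) and (s-vi); and for $L\,N$ and $\caseof{L}\{\cdots\}$ one splits on whether the principal subterm $L$ is a w.h.n.f.\ (then exactly one of (s-i), (s-i'), (s-ii), (s-vii) fires, selected by the head of $L$) or not (then (s-v) resp.\ (s-iv) fires, using the induction hypothesis). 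This case analysis also pins down which rule applies to each shape of term, which is what drives (1): determinism follows by structural induction on $M$, the only potential overlap being between a base rule and a congruence rule when $L$ is a w.h.n.f., and this is excluded precisely because w.h.n.f.s do not step (the easy half of (3)).

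For (2) I would verify $\val{M}=\val{M'}$ rule by rule: (s-i)/(s-i') hold by the defining equations of the case semantics at a head constructor together with the substitution lemma; (s-ii) is $\beta$-equality; (s-iii) is the fixed-point equation $\val{\rec M}=\val{M}\,\val{\rec M}$; (s-vii) is the fact that applying a non-$\Fun$ element of $D$ yields $\bot$; and (s-iv), (s-v), (s-vi) follow from compositionality of $\val{\cdot}$ and the induction hypothesis on the premise. These are as in \cite{IFP}, the new constructor contributing only clause~(\ref{eq-Amb}), handled like $\Pair$. The Adequacy Lemma (4) is the real work, and I would prove it by a formal approximation (logical) relation, the standard computational-adequacy technique for lazy languages. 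Define $d \lesssim M$ (for $d\in D$, $M$ closed) to hold iff every compact $c \sqsubseteq d$ satisfies $c \lesssim_0 M$, where $\lesssim_0$ is defined on compact elements by recursion on $\rk$: $\bot \lesssim_0 M$ always; $C(\vec c) \lesssim_0 M$ (for any constructor $C$, including $\Amb$) iff $M \ssp^* C(\vec N)$ for some closed $\vec N$ with $c_i \lesssim N_i$; and $\Fun(g) \lesssim_0 M$ iff $M \ssp^* \lambda x.M'$ and $g(e) \lesssim M'[N/x]$ whenever $e$ is compact and $e \lesssim_0 N$. By the rank assignment the arguments $c_i$ and the finitely many outputs $g(e)$ of a compact function have rank strictly below $\rk(C(\vec c))$ resp.\ $\rk(\Fun(g))$, so the recursion is well-founded. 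By construction $\lesssim$ satisfies $\bot \lesssim M$ for all $M$ and is admissible, i.e.\ closed under least upper bounds of directed sets in its first argument, since the compacts below a directed sup are the union of those below its members.

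The heart of the argument is the Fundamental Lemma: for every program $M$ with free variables among $\vec a$, every environment $\eta$ and every closing substitution $\theta$ with $\eta(a_i) \lesssim \theta(a_i)$ for all $i$, one has $\valu{M}{\eta} \lesssim M\theta$, proved by induction on the structure of $M$. The variable, constructor, $\lambda$, application and case cases combine the induction hypothesis with closure of $\lesssim$ under anti-reduction (if $M \ssp M'$ and $d \lesssim M'$ then $d \lesssim M$, immediate since $\lesssim$ is phrased through $\ssp^*$). The one genuinely delicate case, and the main obstacle, is $\rec M$: here one writes $\val{\rec M}$ as the least upper bound of the chain $\bot \sqsubseteq \val{M}\,\bot \sqsubseteq \val{M}\,(\val{M}\,\bot) \sqsubseteq \cdots$, proves by induction on $n$ that the $n$-th iterate $\lesssim \rec M$ (base case $\bot \lesssim \rec M$; step using the induction hypothesis for $M$, the fact established in the application case that $\lesssim$ is preserved by domain application, and anti-reduction through (s-iii), $\rec M \ssp M\,(\rec M)$), and then concludes by admissibility. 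Finally, instantiating the Fundamental Lemma at a closed $M$ gives $\val{M} \lesssim M$; if $\val{M} \ne \bot$ then some nonbottom compact $c \sqsubseteq \val{M}$ has $c \lesssim_0 M$ with $c \ne \bot$, and the defining clauses of $\lesssim_0$ then force $M \ssp^* V$ for a w.h.n.f.~$V$, which is the claim.
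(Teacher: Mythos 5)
Your proposal is correct, and it takes essentially the same route as the paper: the paper declares (1)--(3) easy and, for the Adequacy Lemma (4), simply defers to Lemmas 32--33 of the cited $\IFP$ paper, whose underlying argument is exactly the formal approximation (logical) relation you construct, indexed by rank on compact elements and closed under anti-reduction and directed suprema, with the $\rec$ case handled via the chain of iterates and admissibility. Your only addition is the explicit observation that $\Amb$ is, at the level of the Phase~I denotation $\val{\cdot}$, just another pairing constructor and so requires no change to the argument --- which is precisely why the paper can reuse the $\IFP$ proof verbatim.
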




%
%




\subsection{Making choices}
Next, we define the reduction relation $\newprintc$ (`c' for 'choice') 
that reduces 
arguments of $\Amb$ in parallel. 

\begin{itemize}
\setlength{\itemsep}{0.2cm}
\setlength{\itemindent}{0.4cm}
\item[(c-i)] \AxiomC{$M \ssp M' $} 
\UnaryInfC{$M \newprintc M'$}
\DisplayProof 

%
\item[(c-ii)] 
\AxiomC{$M_1 \ssp M_1'$}  
\UnaryInfC{$\Amb(M_1,M_2) \newprintc \Amb(M_1',M_2)$}
\DisplayProof 
\item[(c-ii')] 
\AxiomC{$M_2 \ssp M_2'$}  
\UnaryInfC{$\Amb(M_1,M_2) \newprintc \Amb(M_1,M_2')$}
\DisplayProof 
%
%


\item[(c-iii)] 
$\Amb(M_1, M_2) \newprintc M_1$  \ 
if $M_1$ is a w.h.n.f.


\item[(c-iii')] 
$\Amb(M_1, M_2) \newprintc M_2$  \ 
if $M_2$ is a w.h.n.f.

\end{itemize}


%
From this definition and Lemma \ref{lem:ssp}, it is immediate that
$M$ is a $\newprintc$-normal form iff $M$ is a
\emph{deterministic weak head normal form (d.w.h.n.f.)}, that is,
a w.h.n.f.\ that does not begin with $\Amb$.
Finally, we define a reduction relation $\newprintp$ 
that reduces arguments of data constructors in parallel.
%
\begin{itemize}
\setlength{\itemsep}{0.2cm}
\setlength{\itemindent}{0.4cm}
\item[(p-i)]
\AxiomC{$M \newprintc M' $} 
\UnaryInfC{$M \newprintp M'$}
\DisplayProof 
\item[(p-ii)]
\AxiomC{$M_i \newprintp M_i'$ $(i = 1,\ldots, k)$} 
\UnaryInfC{$C(M_1,\ldots,M_k) \newprintp C(M_1',\ldots,,M_k')$}
\DisplayProof 
($C\in\mathrm{C_d}$)
\\
\item[(p-iii)]
  $\lambda a.\,M \newprintp \lambda a.\,M$ 
\end{itemize}
%
Every (closed) 
program reduces under $\newprintp$
(easy proof by structural induction). 
For example, $\Nil\newprintp \Nil$ 
 by (p-ii). In the following, all $\newprintp$-reduction sequences are assumed 
to be infinite.




%
We call a $\newprintp$-reduction sequence 
\emph{unfair} if, intuitively, from some point on, one side of an 
$\Amb$ term is permanently reduced but not the other. 
More precisely, we inductively  define 
$M_1\newprintp M_2 \newprintp \ldots$ to be unfair if
\begin{itemize}
\item each $M_i$ is of the form $\Amb(L_i,R)$ (with fixed $R$) 
and $L_i \ssp L_{i+1}$, or  
%
\item each $M_i$ is of the form $\Amb(L,R_i)$ (with fixed $L$) and
$R_i \ssp R_{i+1}$, or   
%
\item each $M_i$ is of the form $C(N_{i,1},\ldots,N_{i,n})$ 
(with a fixed $n$-ary constructor $C$) and $N_{1,k} \newprintp N_{2,k}\newprintp \ldots$
is unfair for some $k$, or
\item the tail of the sequence, $M_2\newprintp M_3\ldots$, is unfair.
\end{itemize}
A $\newprintp$-reduction sequence is \emph{fair} if it is not unfair.

Intuitively,  
reduction by $\newprintp$ proceeds as follows:
A program $L$ is head reduced by $\ssp$ to a 
w.h.n.f.\ $L'$, 
and
if $L'$ is a data constructor term, all arguments are reduced in parallel by (p-ii).
If $L'$ has the form $\Amb(M, N)$, 
two concurrent threads 
are invoked for the reductions of $M$ and $N$ in parallel, 
and the one reduced to a 
w.h.n.f.\ first is used. 
Fairness corresponds to the fact that the `speed' of each thread is positive
which means, in particular, that no thread can block another.
Note that $\newprintc$ is not used for the reductions of 
$M$ and $N$ in (s-ii), (s-iv), (s-v) and (s-viii).
This means that  $\newprintc$ is applied only to the outermost redex.
Also, (c-ii) is 
defined through $\ssp$, not $\newprintc$,  
and thus no thread creates new threads.
This ability to limit the bound of threads was not available
in an earlier version of this language~\cite{BergerCSL16}
(see also the discussion in Sect.~\ref{sub-related}).

\subsection{Computational adequacy:
  Matching denotational and operational semantics}
We define 
${M}_{{D}} \in {D}$ by structural induction on programs:
%
\begin{align*}
{C(M_1,\ldots,M_k)}_{D} &= C({M_1}_{D},\ldots, {M_k}_{D})
& (C\in \mathrm{C_d})\\
{(\lambda a. M) }_{D} &= \val{\lambda a. M}   \\
{M}_{D} &= \bot  & \mbox{otherwise}
\end{align*}
Since clearly $M \newprintp N$ implies $M_{D} \sqsubseteq_{D} N_{D}$,
for every computation sequence $ M_0 \newprintp M_1 \newprintp  \ldots$,  
  the sequence $((M_i)_{D})_{i \in \NN}$ is increasing and therefore
  has a least upper bound in $D$. 
Intuitively, $M_{D}$ is the part of $M$ that has been fully evaluated to a data.




A {\em computation} of $M$ is \UB{an infinite} fair 
sequence 
$M = M_0 \newprintp M_1 \newprintp \ldots$.  
\begin{theorem}[Computational Adequacy: Soundness]
\label{thm:data}
For every computation  $M =  M_0 \newprintp M_1 \newprintp  \ldots$, 
$\sqcup_{i \in \NN} (M_i)_{D} \in \ddata(\val{M})$. 
\end{theorem}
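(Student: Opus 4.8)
The plan is to prove this by coinduction on $\ddata$, exploiting that $d\in\ddata(a)$ is defined as a greatest fixed point. Writing $\Phi$ for the monotone operator on binary relations whose unfolding is the displayed disjunction defining $d\in\ddata(a)$, I would set
\[
R=\{(\sqcup_{i\in\NN}(M_i)_{D},\,\val{M_0})\mid M_0\newprintp M_1\newprintp\ldots\text{ is a fair reduction sequence}\}.
\]
Since $(M_i)_{D}$ is increasing (by the monotonicity remark preceding the theorem) the least upper bound exists, so $R$ is well defined and the pair named in the statement lies in $R$. By the coinduction principle it then suffices to show $R\subseteq\Phi(R)$, i.e.\ that every $(d,a)\in R$ satisfies one of the disjuncts defining $\ddata$, with each recursive occurrence ``$\cdot\in\ddata(\cdot)$'' replaced by ``$\cdot\mathrel{R}\cdot$''. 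Two auxiliary facts will be used throughout: a suffix of a fair sequence is again fair (immediate from the coinductive definition of fairness), and $(\Amb(\cdot,\cdot))_{D}=\bot$ because $\Amb\notin\mathrm{C_d}$, as is $(\cdot)_{D}$ of any non-w.h.n.f.\ term.

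Fix $(d,a)\in R$ with witnessing fair sequence and $a=\val{M_0}$. First I would observe that while $M_i$ is not a w.h.n.f.\ the only applicable rule is (p-i) via (c-i), so $M_i\ssp M_{i+1}$ is forced and unique by Lemma~\ref{lem:ssp}(1), and $(M_i)_{D}=\bot$ there. Hence either every $M_i$ is a non-w.h.n.f.\ and the sequence is just the $\ssp$-reduction of $M_0$, which diverges, giving $\val{M_0}=\bot$ by the contrapositive of Lemma~\ref{lem:ssp}(4) and $d=\bot$, so the disjunct $a=d=\bot$ holds; or $M_0\ssp^*M_m=V$ for the first w.h.n.f.\ $V$, with $\val V=\val{M_0}=a$ by Lemma~\ref{lem:ssp}(2) and $d=\sqcup_{i\ge m}(M_i)_{D}$. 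In the latter case I split on the shape of $V$.

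If $V=\lambda x.N$, rule (p-iii) forces $M_i=V$ for all $i\ge m$, so $d=\val V=\Fun(f)=a$ and the $\Fun$-disjunct holds. If $V=C(\vec P)$ with $C\in\mathrm{C_d}$ of arity $k$, rule (p-ii) forces $M_i=C(\vec{P^{(i)}})$ for $i\ge m$, each component $(P_j^{(i)})_{i\ge m}$ being an infinite fair $\newprintp$ sequence by the category-(b) clause of fairness; continuity of $C$ gives $d=C(\vec{d'})$ with $d'_j=\sqcup_i(P_j^{(i)})_{D}$, and since $\val{P_j^{(m)}}=a'_j$ where $a=C(\vec{a'})$, each $(d'_j,a'_j)\in R$, establishing the data-constructor disjunct with $\ddata$ replaced by $R$.

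The essential case, and the step I expect to be the main obstacle, is $V=\Amb(P,Q)$, so $a=\Amb(a',b')$ with $a'=\val P$, $b'=\val Q$. From $M_m=\Amb(P,Q)$ the reductions are by (c-ii), which reduces the two components \emph{by $\ssp$} (not $\newprintp$) at independent speeds, until possibly a choice (c-iii)/(c-iii') is taken; throughout this phase $(M_i)_{D}=\bot$, so $d$ equals the lub of whatever follows a choice. The crux is to use fairness to force and classify the choices: if $a'\ne\bot$ then by Lemma~\ref{lem:ssp}(4) the left thread reaches a w.h.n.f.\ $P^*$ after finitely many $\ssp$ steps and is then frozen ($\ssp$-normal, so only $l=0$ is possible), whereas fairness forbids applying (c-ii) with $l=0$ from some point on; hence a choice must occur after finitely many steps, and if moreover $b'=\bot$ the right thread never becomes a w.h.n.f., so only (c-iii) (choosing the left) is available. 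Symmetrically for $b'\ne\bot$. Consequently: if $a'=b'=\bot$ both threads $\ssp$-diverge, no component is ever a w.h.n.f., (c-iii)/(c-iii') never apply, the sequence stays an $\Amb$-term forever, $d=\bot$, and the disjunct $a=\Amb(\bot,\bot)\wedge d=\bot$ holds; otherwise a choice is made in finitely many steps, and when it chooses the left it selects a w.h.n.f.\ of denotation $a'\ne\bot$, so the fair suffix starting there witnesses $(d,a')\in R$ and yields the left disjunct $a=\Amb(a',b')\wedge a'\ne\bot\wedge(d,a')\in R$, while a right choice yields the right disjunct symmetrically. This exhausts the shapes of $V$, so $R\subseteq\Phi(R)$ and therefore $R\subseteq\ddata$, which is the claim.
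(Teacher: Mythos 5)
Your proof is correct and follows essentially the same route as the paper's: the same coinductive relation (your $R$ is the paper's $P$), closure under the unfolding operator $\Phi$, and the same fairness-based analysis of the $\Amb$ case (fairness forces a choice when one argument converges, and forces both denotations to be $\bot$ when no choice is ever made). The only difference is organizational — you case-split on the shape of the first w.h.n.f.\ reached, while the paper case-splits on the categories (a)/(b)/(c) of the reduction sequence — but the content is the same.
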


The converse 
does not hold in general,  
i.e.\ 
$d \in \ddata(\val{M})$ does not necessarily imply
$d = \sqcup_{i \in \NN} ((M_i)_D)$  
for some computation of $M$.
%
For example, for 
$M \eqdef \rec\, \lambda\, a.\, \Amb(a,\bot)$
(for which $\val{M} = \val{\Amb(M,\bot)}$) one sees 
that $d \in \ddata(\val{M})$ 
for every $d \in D$ while $M \newprintp^* M$ and $M_{D} = \bot$.
But 
$M$ has the type $\tfix{\alpha}{\Am(\alpha)}$ which is not
regular (see Sect.~\ref{sub-prog}). 
For programs of a regular type, 
the converse of Thm.~\ref{thm:data} holds. 

\begin{theorem}[Computational Adequacy: Completeness]
\label{thm:dataconv}
If $M$ has a regular type, then
for every $d \in \ddata(\val{M})$, there is a computation 
$M = M_0 \newprintp M_1 \newprintp \ldots$ with
$d = \sqcup_{i \in \NN} ((M_i)_{D})$.
%
\end{theorem}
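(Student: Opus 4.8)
The plan is to build the required infinite reduction sequence incrementally, using Lemma~\ref{lem:approx} to force more and more of $d$ to appear in the denotations $(M_i)_D$. Since $D$ is a countably based algebraic domain (Sect.~\ref{sub-dom}), the element $d$ is the least upper bound of an increasing chain $e_0\sqsubseteq e_1\sqsubseteq\cdots$ of compact elements. Starting from $P_0\eqdef M$, which has regular type and satisfies $d\in\ddata(\val{P_0})$, I would define a sequence of ``milestones'' $P_0,P_1,\dots$ as follows: given $P_n$ of regular type with $d\in\ddata(\val{P_n})$, apply Lemma~\ref{lem:approx} to $P_n$, $d$ and the compact element $e_n\sqsubseteq d$ to obtain $P_{n+1}$ with $P_n\newprintp^* P_{n+1}$, $d\in\ddata(\val{P_{n+1}})$ and $e_n\sqsubseteq (P_{n+1})_D\sqsubseteq d$. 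This step relies on the fact, easily checked from the typing and reduction rules, that $\newprintp$ preserves regular types (the $\Amb$-choice rule lowers $\Am^{k}(\rho)$ to $\Am^{k-1}(\rho)$, and data constructors pass to regular constituent types), so that each $P_{n+1}$ is again admissible for Lemma~\ref{lem:approx}. Concatenating the finite reductions $P_0\newprintp^* P_1\newprintp^* P_2\newprintp^*\cdots$ yields a single $\newprintp$ reduction sequence.

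It remains to check two things: that this sequence has least upper bound $d$, and that it is genuinely infinite. For the first point, recall that $M\newprintp N$ implies $M_D\sqsubseteq N_D$, so the values $(P_n)_D$ form an increasing chain that dominates the cofinal chain $(e_n)$ while staying below $d$; hence $\sqcup_n (P_n)_D=\sqcup_n e_n=d$. Every term occurring between two consecutive milestones $P_n\newprintp^*P_{n+1}$ has denotation sandwiched between $(P_n)_D$ and $(P_{n+1})_D$, hence bounded by $d$, so the least upper bound of the whole sequence agrees with that of its cofinal subsequence of milestones, namely $d$.

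The main obstacle is the second point, infiniteness, which the milestone construction alone does not guarantee, since the individual segments $P_n\newprintp^*P_{n+1}$ may be empty. If $d$ is not compact this causes no trouble, because then infinitely many $e_n$ fail to lie below the current milestone value, forcing a genuine reduction step and hence an infinite sequence. If $d$ is compact, the construction stabilises after finitely many steps at some $P$ with $(P)_D=d$, and I must pad the sequence to an infinite one without raising the denotation above $d$. The crucial sub-claim is that whenever $(P)_D=d$ and $d\in\ddata(\val{P})$ there is a step $P\newprintp P'$ with $(P')_D=d$ and $d\in\ddata(\val{P'})$, so that padding proceeds coinductively. This is proved by analysing the shape of $P$: a term $\Nil$ or a $\lambda$-abstraction reduces to itself (a self-loop, by (p-ii) and (p-iii)), and a data-constructor term $C(\vec Q)$ is reduced componentwise by (p-ii), the delicate components being those $Q_i$ with $(Q_i)_D=\bot$, for which $\bot\in\ddata(\val{Q_i})$. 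For such a component one shows, using the Adequacy Lemma (Lemma~\ref{lem:ssp}~(4)) together with the defining clauses of $\ddata$, that $Q_i$ can always be head-reduced or, if it reaches an $\Amb$-normal form, can select the $\Amb$-branch that keeps $\bot\in\ddata$ and the denotation at $\bot$; the case $\val{Q_i}=\bot$ gives an infinite $\ssp$-sequence of value $\bot$ directly. Verifying that this value-preserving choice always exists---that is, that the constraint $d\in\ddata(\val{P})$ genuinely rules out the branches which would overshoot $d$---is the heart of the argument.
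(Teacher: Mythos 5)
Your proposal follows the paper's proof exactly: fix an increasing chain of compact approximations $e_0\sqsubseteq e_1\sqsubseteq\cdots$ with supremum $d$, iterate Lemma~\ref{lem:approx} to obtain milestones $P_n$ of regular type with $P_n\newprintp^* P_{n+1}$, $d\in\ddata(\val{P_{n+1}})$ and $e_n\sqsubseteq (P_{n+1})_D\sqsubseteq d$, and concatenate the resulting reductions. The only substantive difference is that you also verify that the concatenated sequence is genuinely infinite via the value-preserving padding sub-claim; the paper's proof passes over this point in silence, although it is a real (if minor) gap there, since when $d$ is compact all segments $P_n\newprintp^*P_{n+1}$ may be empty and an arbitrary continuation could make a wrong $\Amb$-choice and overshoot $d$.
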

%
A computation $M = M_0 \newprintp M_1 \newprintp \ldots$ 
is {\em productive} if some $M_i$ is a deterministic w.h.n.f.
Clearly, this is the case iff 
$\sqcup_{i \in \NN} ((M_i)_{D})\neq\bot$.
Therefore,
by the Adequacy Theorem and Lemma~\ref{lem:ddatabot}: 
%
\begin{corollary}\label{cor:ddatabot}
For a program $M$ of regular type, the following 
are equivalent.
  \begin{enumerate}
   \item[(1)] One of the  computations of $M$ is productive.
   \item[(2)] All 
computations  of $M$ are productive.  \mps{\HT{Deleted infinite because computation means infinite in definition.}}
   \item[(3)] $\val{M}$ is neither $\bot$ nor $\Amb(\bot,\bot)$.
\end{enumerate}
\end{corollary}
%
The corollary 
does not hold without  the 
regularity
condition.
For example,  $M = \Amb(\Amb(\Nil,\Nil), \Amb(\bot, \bot))$  can be reduced to $M_1 = \Amb(\bot, \bot)$ and then
repeats $M_1$ forever,  whereas it can also be reduced to $\Nil$.
\HT{
McCarthy's $\amb$ operator is bottom-avoiding
in that when it can terminate, it always terminates.
\UB{Corollary~\ref{cor:ddatabot}}  guarantees a similar property for our 
globally angelic choice operator $\Amb$.
}

\mps{\HT{omitted the last paragraph because not so clear.}}


\section{CFP (Concurrent Fixed Point Logic)} 
\label{sec-cfp}
In \cite{IFP}, the system $\IFP$ (Intuitionistic Fixed Point Logic) was 
introduced.
$\IFP$ is an intuitionistic first-order logic with strictly positive 
inductive and coinductive definitions, from the proofs of which programs can be extracted.
%
$\CFP$ 
is obtained by adding to $\IFP$ two propositional
operators, $\rt{A}{B}$ and $\Set(B)$, that facilitate the extraction of 
nondeterministic and concurrent 
programs. 
%

\subsection{Syntax}
\label{sub-cfpsyntax}
$\CFP$ is defined relative to a
many-sorted first-order language. 
%
$\CFP$-formulas 
have the form 
$A \land B$,  $A \lor B$, $A \to B$,  
$\forall x\, A$, $\exists x\, A$, 
$s = t$ ($s$, $t$ terms of the same sort), 
$P(\vec t)$ (for a predicate $P$ and terms $\vec t$ of fitting arities),
as well as $\rt{A}{B}$ (restriction) and $\Set(B)$ 
(concurrency).
%
Predicates are either predicate constants (as given by the first-order language),
or predicate variables (denoted $X,Y,\ldots$), 
or comprehensions $\lambda\vec x\,A$ (where $A$ is a formula and $\vec x$ 
is a tuple of first-order variables), 
or fixed points $\mu(\Phi)$ and $\nu(\Phi)$
(least fixed point aka inductive predicate 
and greatest fixed point aka coinductive predicate) where $\Phi$ is a
strictly positive (s.p.) operator. Operators are of the form
$\lambda X\,Q$ where $X$ is a predicate variable and $Q$ is a predicate  
of the same arity as $X$. 
$\lambda X\,Q$ is s.p.\ if 
every free occurrence of $X$ in $Q$ 
is at a strictly positive position,
that is, at a position that is not in
the left part of an implication.
We identify $(\lambda\vec x\,A)(\vec t)$ with $A[\vec t/\vec x]$ where
$[\vec t/\vec x]$ means capture avoiding substitution.

The following syntactic properties of expressions (i.e., formulas, predicates and operators)
will be important. A \emph{Harrop} expression is one that 
contains at strictly positive positions neither free predicate variables nor 
disjunctions ($\lor$) nor restrictions  ($\rt{}{}$)
nor concurrency ($\Set$).
An expression is \emph{non-Harrop} if it is not Harrop; 
it is \emph{non-computational (nc)} if it contains neither
disjunctions, nor restrictions nor
concurrency
nor free predicate variables. Every nc-formula is Harrop but not conversely.
Finally, we define, recursively, when a formula is \emph{strict}:
Harrop formulas and disjunctions are strict. A non-Harrop conjunction is strict
if either both conjuncts are non-Harrop or 
it is a conjunction of a Harrop formula and a strict formula.
%
A non-Harrop implication is strict if the premise is non-Harrop.
Formulas of the form $\diamond x\,A$ ($\diamond\in\{\forall,\exists\}$) or
$\munu(\lambda X\lambda\vec x\,A)$ ($\munu\in\{\mu,\nu\}$) are
 \emph{strict} if $A$ is strict.
Formulas of other forms (e.g., $\rt{A}{B}$,  $\Set(A)$, $X(\vec{t})$) 
are not strict.
%
The significance of these definitions is that Harropness ensures that
(a proof of) the formula will have no computational content. Strictness ensures,
among other things, that $\bot$ is not a realizer (see Sect.~\ref{sec-pe}). 

As an additional requirement for formulas to be wellformed we demand
that in formulas of the form $\rt{A}{B}$ or $\Set(B)$, $B$ must be strict.




%

%
\emph{Notation}: $P(\vec t)$ will also be written $\vec t \in P$, and
if $\Phi$ is 
$\lambda X\,Q$, then
$\Phi(P)$ stands for $Q[P/X]$.
Definitions (on the meta level) 
of the form 
$P \eqdef \munu(\Phi)$ 
($\munu\in\{\mu,\nu\}$) 
where $\Phi = \lambda X\,\lambda \vec x\,A$,
will usually be written $P(\vec x) \eqmunu A[P/X]$. 
We write $P \subseteq Q$ for 
$\forall \vec{x}\  (P(\vec{x}) \to Q(\vec{x}))$,
  $\forall x \in P\ A$ for $\forall x\  (P(x) \to  A)$, and
  $\exists x \in P\ A$ for $\exists  x\  (P(x) \land  A)$.
$\neg A \eqdef A \to \False$ where $\False\eqdef\mu(\lambda X\, X)$. 
%

\subsection{Proof rules}
\label{sec-ProofRules}
The proof rules of CFP contain those of $\IFP$,
which are the 
usual 
natural deduction rules for 
intuitionistic first-order logic with equality (see e.g. [53]), 
plus the following rules for induction and 
coinduction, where $\Phi$ is a s.p.\ operator:
%
\[
  \infer[\CL(\Phi)]
{
  \Phi(\mu(\Phi)) \subseteq \mu(\Phi)   
  }
{
 }
\qquad
  \infer[\IND(\Phi,P)]
{
  \mu(\Phi) \subseteq P 
  }
{
 \Phi(P) \subseteq P
 }
\]
\[
  \infer[\COCL(\Phi) ]
{
  \nu(\Phi) \subseteq \Phi(\nu(\Phi))   
  }
{
 }
\qquad
  \infer[\COIND(\Phi,P)]
{
  P \subseteq \nu(\Phi) 
  }
{
  P \subseteq \Phi(P)
 }
\]
%
%
The rules for restriction and 
concurrency 
are (with the earlier mentioned condition that in formulas of the form
$\rt{A}{B}$ or $\Set(B)$, $B$ must be strict): 

\medbreak
\noindent
\fbox{\small
\begin{minipage}{\textwidth}
\[
\infer[\hbox{\begin{tabular}{l}Rest-intro\\
($A, B_0, B_1$ Harrop)\end{tabular}
}]{
        \rt{A}{(B_0 \vee B_1)}
}{
A \to (B_0 \vee B_1) \ \ \     \neg A \to B_0 \wedge B_1
}
\]
\[
\begin{array}{ll}
\infer[\hbox{Rest-bind}]{
      \rt{A}{B'}
}{
 \rt{A}{B}\ \ \          B \to (\rt{A}{B'})
}
\ \ \ \ \ \ \ \ & 
\infer[\hbox{Rest-return \ \ \ }]{  
 \rt{A}{B}
}{
  B
}  \\\\
  \infer[\hbox{Rest-antimon}]{
    \rt{A'}{B}
    }{
      A' \to A \ \ \ \rt{A}{B}  
}&
  \infer[\hbox{Rest-mp}]{
    B
}{
\rt{A}{B} \ \ \    A
}
\end{array}
\]
\[
\begin{array}{ll}
  \infer[\hbox{Rest-efq}]{
  \rt{\False}{B}
}{
}
\ \ \ \ \ \ \ \ &
\infer[\hbox{Rest-stab}]{
    \rt{\neg\neg A}{B}
    }{
    \rt{A}{B}
}

\end{array}
\]

\[
  \infer[\hbox{Conc-lem}]{
  \Set(B)
}{
\rt{A}{B}   \ \ \ \     \rt{\neg A}{B}
}
\qquad
  \infer[\hbox{Conc-return}]{\ 
  \Set(A)
}{
A
}
\]
\[
  \infer[\hbox{Conc-mp}]{\
\Set(B)
}{
  A\to B\ \ \  \Set(A) 
}\ \ \ \ \ 
%
\]
\end{minipage}
}
\medbreak

In Sect.~\ref{sec-pe} we will prove that each of these rules is 
realized by a program
from our programming language in Sect.~\ref{sec-ang}. 

\subsection{Tarskian semantics, axioms and classical logic}
\label{sub-tarski}
Although we are mainly interested in the realizability 
interpretation of $\CFP$, it is important that
all proof rules of $\CFP$ are also valid w.r.t.\ a standard Tarskian 
semantics, provided we identify $\rt{A}{B}$ with $A \to B$ and 
$\Set(B)$ with $B$.


Like $\IFP$, $\CFP$ is parametric in a set $\ax$ 
of \emph{axioms}, which have to be closed nc-formulas.
%
The significance of the restriction to nc-formulas is that these
are identical to their (formalized) realizability interpretation 
(see Sect.~\ref{sec-pe}),
in particular, Tarskian and realizability semantics coincide for them.
Axioms should be chosen such that they are true in an intended 
Tarskian model.
Since Tarskian semantics admits classical logic, this means that
a fair amount of classical logic is available through axioms.
For example, for each closed nc-formula $A(\vec x)$, stability, 
$\forall \vec x\,(\neg\neg A(\vec x) \to A(\vec x))$
can be postulated as axiom.
In addition, the rule \HT{(Conc-lem)} 
is a variant of the 
classical law of excluded middle and (Rest-stab) permits stability for
arbitrary right arguments of restriction.

In our examples and case studies we will use
\HT{an instance of $\CFP$ with a sort for real numbers and}
some standard axiomatization of real closed fields formulated as a set of 
nc-formulas. 
In particular, we will freely use 
constants, operations and relations such as 
$0,1,+,-,*, <, |\cdot|, /$ and assume their expected properties as axioms 
(expressed as nc-formulas).



\section{Program extraction}
\label{sec-pe}
We define a realizability interpretation of $\CFP$ that will enable us
to extract concurrent programs from proofs. 
Since the interpretation extends 
the one in $\IFP$~\cite{IFP}, it suffices
to define realizability for
the restriction and
concurrency operators
and prove that their 
proof
rules are realizable (Sects.~\ref{sub-partial}). 
All definitions and proofs of this section can be carried out in a formal 
system $\RCFP$ (realizability logic for $\CFP$) which is $\CFP$ 
without $\rt{}{}$ and $\Set$ but with classical logic and an 
extended first-order language 
\UB{that contains the earlier introduced programs and types as terms and
 the typing relation `$:$' as a predicate constant, 
and describes their semantics through suitable axioms.}
In particular, $\RCFP$ 
\UB{proves}
the correctness of extracted programs
(Soundness Theorem~\ref{thm-soundnessI}). Since it only matters that $\RCFP$ is
classically correct (since no realizability interpretation 
is applied to it), 
details of $\RCFP$ do not matter and are therefore omitted.

\subsection{Realizability}
\label{sub-realizability}
%
%
Realizability for $\CFP$ is formalized in $\RCFP$ and follows the
pattern in \cite{IFP}.
For every non-Harrop $\CFP$-formula $A$ a type $\tau(A)$ and a 
$\RCFP$-predicate $\rea(A)$ 
are defined such that $\rea(A)$ is a subset of $\tau(A)$
(more precisely, $\RCFP$ proves 
\HT{$\forall a (\rea(A)(a) \to a:\tau(A))$}
hence the interpretation of $\rea(A)$ is a subset of $\tval{\tau(A)}{}$).
We often write $\ire{a}{A}$ for $\rea(A)(a)$ (`$a$ realizes $A$') and 
$\re\,A$ for $\exists a\,\rea(A)(a)$ (`$A$ is realizable').

Since Harrop formulas (see Sect.~\ref{sub-cfpsyntax})
have trivial computational content, it only matters whether they are
realizable or not.
Therefore, we define for a Harrop formula $A$, a $\RCFP$-formula
 $\reah(A)$ that 
represents
the realizability interpretation of $A$, but with suppressed realizer.
Formally, we define by simultaneous recursion, for every 
Harrop $\CFP$-expression $E$ an $\RCFP$-expressions $\reah(E)$,
and for every non-Harrop $\CFP$-expressions $E$
an $\RCFP$-expressions $\rea(E)$.
%
%
It is convenient to set, in addition, for Harrop formulas
$\tau(A) \eqdef \one$ and 
$\rea(A) \eqdef \lambda a\,(a = \Nil \land \reah(A))$,
so that $\tau(A)$ and $\rea(A)$ are defined for \emph{all} $\CFP$-formulas.

\begin{figure}
\fbox{\small
\begin{minipage}{\textwidth}
For Harrop formulas $A$: 
$\tau(A) = \one$ and  $\rea(A) = \lambda a\,(a = \Nil \land \reah(A))$.
\bigskip

$\tau(E)$ for non-Harrop expressions $E$:
\begin{align*}
\tau(P(\vec t)) &= \tau(P) \qquad\quad
\tau(A \lor B) = \tau(A) + \tau(B)\\
\tau(A \land B) &= \left\{ \begin{array}{ll}
            \tau(A) \times \tau(B) &\hbox{($A,B$ non-Harrop)}\\
           \tau(A)  &\hbox{($B$ Harrop)}\\             
           \tau(B)  &\hbox{($A$ Harrop)}
                            \end{array}\right. \\
\tau(A \to B) &= \left\{ \begin{array}{ll}
      \ftyp{\tau(A)}{\tau(B)}  &\hbox{($A$ non-Harrop)}\\
      \tau(B)  &\hbox{($A$ Harrop)}
                         \end{array}\right.\\
\tau(\rt{A}{B}) &= \tau(B) \qquad 
\tau(\Set(B)) = \Am(\tau(B)) 
\\
\tau(\diamond x\,A) &= 
  \tau(A) \qquad \hbox{($\diamond \in\{\forall,\exists\}$)}\\[0.5em]
\tau(X) &= \alpha_X  \qquad\qquad
%
\HT{\tau(P) = \one \hbox{\qquad ($P$ a predicate constant)}}\\
\tau(\lambda \vec x\,A) &= \tau(A) \qquad \qquad
\tau(\munu (\lambda X\,P)) = \tfix{\alpha_X}{\tau(P)}
                  \qquad \hbox{($\munu \in\{\mu,\nu\}$)}\\
\end{align*}

$\rea(E)$ for non-Harrop expressions $E$:
\begin{align*}
\rea(P(\vec t)) &= \lambda a\,(\rea(P)(\vec t,a)) &\\ 
\rea(A\lor B)   &=\lambda c\,(\ex{a}(c=\inl{a}\land\ire{a}{A})\lor
                              \ex{b}(c=\inr{b}\land\ire{b}{B}))\\
\rea(A\land B)  &=\left\{ \begin{array}{ll}
   \lambda c\,(\exists a,b\,(c = \Pair(a,b) \land \ire{a}{A}\land \ire{b}{B}))
                            &\hbox{($A,B$ non-Harrop)}\\
              \lambda a\,(\ire{a}{A} \land \reah(B)) 
                            &\hbox{($B$ Harrop)}\\
              \lambda b\,(\reah(A) \land \ire{b}{B})
                            &\hbox{($A$ Harrop)}
                          \end{array} \right.\\ 
\rea(A\to B)    &= \left\{ \begin{array}{ll}
  \lambda c\,(c:\ftyp{\tau(A)}{\tau(B)} \land  
          \all{a}(\ire{a}{A}\to\ire{(c\,a)}{B})) 
                     &\hbox{($A$ non-Harrop)}\\ 
          \lambda b\,(b:\tau(B) \land (\reah(A) \to \ire{b}{B}))  
                     &\hbox{($A$ Harrop)}
                           \end{array}\right.\\
\rea(\rt{A}{B}) &= \lambda b\,( b\! :\! \tau(B) \land  
                                (\re\, A \to \defined{b}) \land
                               (\defined{b} \to b\,\re\,B)) 
\\
\rea(\Set(B)) &= \lambda c\, \ex{a,b}\, 
      (c = \Amb(a, b) \land a,b:\tau(B) \land (\defined{a} \lor \defined{b})\ \land \\
              &\hspace{5em} (\defined{a} \to a\, \re\, B) \land 
                            (\defined{b} \to b\, \re\, B))
\\
\rea(\allex x\,A)  &=\lambda a\,(\allex x\,(\ire{a}{A})) 
  \qquad \hbox{($\allex\in\{\forall,\exists\}$)}\\[0.5em]
%
\rea(X) &= \reali{X} \qquad\qquad
\rea(\lambda \vec x\,A) = \lambda (\vec x,a)\,(\ire{a}{A})  
\\
\rea(\munu(\lambda X\,P)) &= \munu(\lambda\reali{X}\,\rea(P))
 \qquad \hbox{($\munu\in\{\mu,\nu\}$)}\\
%
\end{align*}

$\reah(E)$ for Harrop expressions $E$:
\begin{align*}
\reah(P(\vec t)) &= \reah(P)(\vec t)\qquad
%
\reah(A\land B)  =
      \reah(A)\land \reah(B) \\
\reah(A\to B)    &= \left\{ \begin{array}{ll}
    \re\,A\to\reah(B) &\hbox{($A$ non-Harrop)}\\
    \reah(A)\to\reah(B) &\hbox{($A$ Harrop)}
                           \end{array}\right.\\
\reah(\allex x\,A)  &=\allex x\,\reah(A)
  \quad \hbox{($\allex\in\{\forall,\exists\}$)}\\[0.5em]
\reah(P) &= P\quad \hbox{($P$ a predicate constant)}\quad\qquad
\reah(\lambda \vec x\,A) = \lambda \vec x\,\reah(A) 
\\
\reah(\munu(\lambda X\,P)) &= \munu(\lambda X\,\reah_X(P))
  \qquad \hbox{($\munu\in\{\mu,\nu\}$)}
%
%
\end{align*}
\end{minipage}
}
\caption{Realizability interpretation of $\CFP$}
\label{fig-realizability}
\end{figure}

The complete definition, which is shown in Fig.~\ref{fig-realizability},
assumes that to each $\CFP$ predicate variable $X$ there are 
assigned a fresh type variable $\alpha_X$ and a fresh $\RCFP$ predicate 
variable $\reali{X}$ with one extra argument for domain elements.
Furthermore, to define realizability for the fixed points of a Harrop operator 
$\lambda X\,P$, we use the notation
\[\reah_X(P) \eqdef\reah(P[\pcv{X}/X])[X/\pcv{X}]\]
where $\pcv{X}$ is a fresh predicate constant assigned to the (non-Harrop) 
predicate variable $X$. 
This is motivated by the fact that $\lambda X\,P$ is Harrop
iff $P[\pcv{X}/X]$ is. The idea is that $\reah_X(P)$ is the same as 
$\reah(P)$ but considering $X$ as a (Harrop) predicate constant.

To see that the definitions make sense, note that a formula 
$P(\vec t)$ is Harrop iff $P$ is,
predicate variables and disjunctions are always non-Harrop, 
a conjunction is Harrop iff both conjuncts are,
an implication $A\to B$ is Harrop iff $B$ is,
and $\forall x\,A$, $\exists x\, A$, $\lambda\vec x\,A$ are Harrop 
iff $A$ is.
%
The rationale and correctness of realizability for restriction and
concurrency
are discussed in Sect.~\ref{sub-partial}.


If a formula $A$ is nc, then it is Harrop (see Sect.~\ref{sub-cfpsyntax} for definitions)
but in addition $A$ and $\reah(A)$ are syntactically identical.
In contrast, in general, a Harrop formula $A$ neither implies nor is implied by 
$\reah(A)$.
%
\begin{lemma}
\label{lem-strict}
For every $\CFP$-formula $A$:
\begin{itemize}
\item[(1)] $\tau(A)$ is a regular type.
\item[(2)] If $A$ is strict, then $\bot$ does not realize $A$, provably in $\RCFP$.
\item[(3)]  $\Amb(\bot, \bot)$ is not a realizer of $A$.
\item[(4)]  \UB{For a program $M$ that realizes $A$, t.f.a.e.:
(i) $M$ has some productive computation; 
(ii) all computations of $M$ are productive; 
(iii) $\val{M} \neq \bot$.}
\end{itemize}
%
\end{lemma}
\begin{proof}
(1) and (2) 
are easily proved by structural induction on formulas.
\UB{(3) follows from the fact that if $A$ is of the form $\Amb(B)$, then $B$
must be strict.} 
(4) is proved by (3) and Corollary~\ref{cor:ddatabot}~(\UB{3}). 
\end{proof}


\paragraph*{Remarks and examples.} 
%
The main difference of our interpretation to the 
usual realizability interpretation of intuitionistic number theory lies in the
interpretation of quantifiers. While in number theory variables range over
natural numbers, which have concrete computationally meaningful representations,
we make no general assumption of this kind,
since it is our goal to extract programs from proofs in abstract mathematics.
This is the reason why we interpret quantifiers \emph{uniformly}, that is, 
a realizer of a universal statement must be independent
of the quantified variable and a realizer of an existential statement does not contain a
witness.
A similar uniform interpretation of quantifiers can be found in the
Minlog system.
The usual definition of realizability of quantifiers in intuitionistic number theory 
can be recovered by relativization to an inductively defined predicate $\NN$
describing natural numbers in unary representation:
$$\NN(x) \eqmu  x = 0 \lor \NN(x-1) $$
which  is shorthand for 
$\NN \eqdef \mu(\lambda X\, \lambda x\, (x = 0 \lor X(x-1)))$.
The type $\tau(\NN)$ assigned to $\NN$ is the recursive type of unary natural numbers
\[\nat\eqdef  \tfix{\alpha}{1+\alpha}.\] 
Realizability for $\NN$ works out as
\[
\ire{a}{\NN(x)} \eqmu (a = \Left \land x = 0)  
\lor \exists b\,(a = \Right(b) \land \ire{b}{\NN(x-1))}\,.
\]
Thus, $\NN(0)$, $\NN(1)$, $\NN(2)$ are realized by $\Left$ (i.e., $\Left(\Nil)$),
$\Right(\Left)$, $\Right(\Right(\Left))$, and so on.
Therefore,  the (unique) realizer of $\NN(n)$ is the unary representation of $n$.
Other ways of defining natural numbers may induce different
representations.  
%
An example of a formula with interesting realizers is the formula 
expressing that the sum of two natural number is a natural number,
\begin{equation}
\label{eq:intro1}
\forall x, y\ (\NN(x) \to \NN(y) \to \NN(x+y)).
\end{equation}
It has type $\nat \to \nat \to \nat$ and
is realized by a function $f$ that, given realizers of $\NN(x)$ and $\NN(y)$, 
returns a realizer of $\NN(x+y)$, hence $f$ performs addition of unary numbers.

\begin{example}[Non-terminating realizer]
\label{ex-d}
Let
$$
\D(x) \eqdef  x\neq 0 \to (x\leq 0 \lor x\geq 0)\,.
$$  
Then 
$\tau(\D) = \bool$ where $\bool = \one + \one$, and $\ire{a}{\D(x)}$ unfolds to
$$
a: \tau(\bool) \land (x \neq 0 \to (a = \Left \land x \leq 0) \lor 
(a = \Right \land x \geq 0)).
$$
Therefore, $\D(x)$ is realized by $\Left$ if $x < 0$ and by $\Right$ if $x > 0$.
If $x=0$, any element of $\tau(\bool)$ realizes $D(x)$,
in particular $\bot$. 
Hence, nonterminating programs,
which, by Lemma \ref{lem-strict}~(4), denote $\bot$,
realize $D(x)$.
In contrast, \emph{strict} formulas are never realized 
by a nonterminating program, as shown in Lemma~\ref{lem-strict}~(2). 

\end{example}

\subsection{Partial correctness and concurrency}
\label{sub-partial}
\label{sub-conc}
We explain realizability for $\rt{A}{B}$ and $\Set(B)$ and show that the associated
proof rules are sound.

As we have seen in Example~\ref{ex-d}, a realizer of an implication
$A \to B$ where $A$ is a Harrop formula is realized by a
`conditionally correct' program $M$,
that is, if $\reah(A)$, then $M$ realizes $B$, but otherwise
no condition is imposed on $M$, in particular $M$ may be non-terminating. 
However, $M$ may terminate but fail to realize $B$. This means that
termination of a realizer of $A\to B$
is not a sufficient condition for correctness (correctness meaning to
realize $B$). But, as explained in the Introduction, 
this is what we need to concurrently realize a formula.
The definition of realizability for the new logical operator 
$\rt{}{}$ (shown in Fig.~\ref{fig-realizability})
achieves exactly this: A realizer of a restriction 
$\rt{A}{B}$ is `partially correct' in the sense that 
it is correct iff it terminates.
%
%
%
%
%
\HT{
By Lemma \ref{lem-strict}~(4),  for a program $M$ to realize
$\rt{A}{B}$ means that $M$ has type $\tau(B)$, 
\UB{and} if $A$ is realizable then all the computations of $M$ are productive,
and conversely, if 
\UB{$M$ has a productive computation}
then $M$ always (that is, independently of the realizability of $A$) realizes $B$.}

To highlight the difference between restriction and implication
in a more concrete situation, 
consider $\rt{A}{(A\lor B)}$ vs.\ $A \to (A \lor B)$
where $A$ is Harrop. Clearly $\Left$ realizes $A \to (A \lor B)$,
but in general
$\rt{A}{(A\lor B)}$ is not realizable.
Note that $\Left$ even
realizes $A \stackrel{\mathrm{u}}{\to} (A \lor B)$ where
$\stackrel{\mathrm{u}}{\to}$ is Schwichtenberg's uniform 
implication~\cite{SchwichtenbergWainer12}, 
hence restriction is also different from uniform implication.

The intuition of $\Amb(a,b)$ realizing $\Set(A)$ 
is that it is a pair of candidate realizers at least one of which 
is productive,  and each productive one is a realizer.

\begin{lemma}
\label{lem-restrict}
The rules for restriction and concurrency are realizable.
\end{lemma}
\begin{proof}
%
The table below shows the realizers of each rule for the (most interesting) 
case where the conclusion is non-Harrop, using the definitions
\begin{eqnarray*}
\leftright\ \eqdef \lambda b. \caseof{b}    \{\Left(\_) \to \Left; \Right(\_) \to \Right\}\,,\\
\mapamb\eqdef \ \lambda f.\, \lambda c.\ \caseof{c}\,
\{\Amb(a,b) \to 
 \Amb(\strictapp{f}{a}, \strictapp{f}{b})\}\,.
\end{eqnarray*}
Proofs of their correctness are
in Appendix~\ref{appendix-proofs}. 
For (Rest-intro), (Rest-stab), and (Conc-lem), classical logic is needed.
Here, we set $a\, \seq\, b \eqdef \strictapp{(\lambda c.\ b)}{a}$.
\end{proof}

\medbreak
\noindent
\fbox{
\begin{minipage}{\textwidth}
\[
\infer[\hbox{\begin{tabular}{l}Rest-intro
($A, B_0, B_1$ Harrop)\end{tabular}
}]{
        \ire{(\leftright\ b)}{\rt{A}{(B_0 \vee B_1)}}
}{
\ire{b}{(A \to (B_0 \vee B_1))} \ \ \     \reah({\neg A \to B_0 \wedge B_1})
}
\]
\[
\begin{array}{ll}
\infer[\hbox{\begin{tabular}{l}
Rest-bind\ ($B$ non-Harrop)\\
\small ($\ire{(a\,\seq\,f)}{\rt{A}{B'}}$ ($B$ Harrop))
\end{tabular}}]
{      \ire{(\strictapp{f}{a})}{\rt{A}{B'}}
}{
\ire{a}{ \rt{A}{B}}\ \ \          \ire{f}{( B \to (\rt{A}{B'}))}
}
& 
\infer[\hbox{Rest-return \ \ \ }]{  
 \ire{a}{\rt{A}{B}}
}{
  \ire{a}{B}
} \end{array}
\]
\[
\begin{array}{ll}
  \infer[\hbox{Rest-antimon}]{
    \ire{a}{\rt{A'}{B}}
    }{
      \ire{}{(A' \to A)} \ \ \ \ire{a}{\rt{A}{B}  }
}\ \ \ \ \ \ \ &
  \infer[\hbox{Rest-mp}]{
    \ire{b}{B}
}{
\ire{b}{\rt{A}{B}} \ \ \    \ire{}{A}
}
\end{array}
\]
\[
\begin{array}{ll}
  \infer[\hbox{Rest-efq}]{
  \ire{\bot}{\rt{\False}{B}}
}{
}
\ \ \ \ \ \ \ \ &
\infer[\hbox{Rest-stab}]{
    \ire{b}{\rt{\neg\neg A}{B}}
    }{
    \ire{b}{\rt{A}{B}}
}

\end{array}
\]

\[
  \infer[\hbox{Conc-lem}]{
  \ire{\Amb(a,b)}{\Set(B)}
}{
\ire{a}{\rt{A}{B}}   \ \ \ \     \ire{b}{\rt{\neg A}{B}}
}
\qquad
  \infer[\hbox{Conc-return}]{\ 
  \ire{\Amb(a,\bot)}{\Set(A)}
}{
\ire{a}{A}
}
\]
\[
  \infer[\hbox{\begin{tabular}{l}
Conc-mp\ ($A$ non-Harrop)\\
\small ($\ire{\Amb(f, \bot)}{\Set(B)}$ ($A$ Harrop))
\end{tabular}}]
{\
\ire{(\mapamb\ f\ c)}{\Set(B)}
}{
  \ire{f}{(A\to B)}\ \ \  \ire{c}{\Set(A) }
}\ \ \ \ \ 
\]
\smallskip
\end{minipage}
}

\begin{lemma}\label{class-orelim-sleep}
$\CFP$ derives the following rules. 
The rules are displayed together with their extracted realizers.
%
\begin{itemize}
 \item[(1)]
\raisebox{-0.2cm}{
$  \infer[]{
   \ire{\Amb(\strictapp{\Left}{a},\strictapp{\Right}{b})}{\Set(B_0\lor B_1)}
 }{
    \ire{a}{\rt{A_0}{B_0}} \ \ \ \  \ire{b}{\rt{A_1}{B_1}}   \ \ \ \  \reah{(\neg\neg(A_0\lor A_1))}
 }$
}
\item[(2)]
\raisebox{-0.2cm}{
$
  \infer[]{
   \ire{\caseof{a} \{\Left(\_) \to \bot;\Right(b) \to b\}}{\rt{D\land\neg B}{C}}
 }{
    \ire{a}{\rt{D}{(B \lor C)}}
 }$}
\qquad {($C$ strict)}

\end{itemize}
\end{lemma}

\begin{example}
\label{ex-dprime}
Continuing Example~\ref{ex-d}, we modify $\D(x)$ to 
$$\D'(x) \eqdef  \rt{x\neq 0}{(x\leq 0 \lor x\geq 0)}\,.$$
A realizer of $\D'(x)$, which has type $\bool$, may or may not terminate
(non-termination occurs when $x = 0$).
However, in case of termination,  the result is guaranteed to 
realize $x\le 0 \lor x\ge 0$.
Note that, a realizer of $\D(x)$ also has type $\bool$ and may or may not terminate,
but there is no guarantee that
it realizes $x\le 0 \lor x\ge 0$ when it does terminate.
Nevertheless, 
$\D \subseteq \D'$ 
follows from (Rest-intro) 
(since $\neg x\ne 0$ 
implies
$x\le 0 \land x\ge 0$) 
and is realized by $\leftright$.
%
$\D' \subseteq \D$ holds trivially.
\end{example}
\begin{example}
\label{example-ConsSD}
This 
builds on the examples~\ref{ex-d} and~\ref{ex-dprime}
and will be used in 
Sect.~\ref{sec-gray}.
Let $\tent(x) = 1-2|x|$ and consider the predicates
$\E(x) \eqdef  \D(x) \land \D(\tent(x))$ and
%
\[
\ConSD(x)  \eqdef  \Set((x\leq 0 \lor x\geq 0) \lor |x| \leq 1/2).
\]
%
We show $\E \subseteq \ConSD$:
From $\E(x)$ and Example~\ref{ex-dprime} we get $\D'(x)$ and $\D'(\tent(x))$
which unfolds to $\rt{x\neq 0}{(x\le 0  \lor x \ge 0 )}$ and 
$\rt{|x|\neq 1/2}{(|x| \geq 1/2  \lor |x| \leq 1/2 )}$. 
By Lemma \ref{class-orelim-sleep}~(2),
$\rt{|x| < 1/2}{(|x| \leq 1/2)}$. 
%
Since $\neg\neg((x \ne 0) \lor |x| < 1/2)$, we have $\ConSD(x)$ 
by  
Lemma \ref{class-orelim-sleep}~(1).
 Moreover, $\tau(\E) = \bool\times \bool$ and 
$\tau(\ConSD) = \Am(\tri)$  where $\tri \eqdef \bool + \one$.
The extracted realizer of $\E \subseteq \ConSD$ is
\begin{align*}
&\conSD \eqdef \lambda c. \caseof{c} \{\Pair(a, b) \to  
   \Amb(\strictapp{\Left}{(\leftright\ a)},\\
  & \hspace*{4.2cm} 
  \strictapp{\Right}{(\caseof{b}\{\Left(\_) \to \bot; \Right(\_) \to \Nil\})})\}   
\end{align*}
of type $\tau(\E \subseteq \ConSD) =  \bool\times \bool \to \Am(\tri)$.
Explanation of this program:
$a$ is $\Left$ or $\Right$ depending on whether $x \leq 0$ or $x \geq 0$
but may also be $\bot$ if $x = 0$.
$b$ is $\Left$ or $\Right$ depending on whether $|x| \leq 1/2$ or $|x| \geq 1/2$ 
but may also be $\bot$ if $|x| = 1/2$.  Since $x = 0$ and $x = 1/2$ do not
happen simultaneously, by evaluating $a$ and $b$ concurrently, we obtain one of them
from which we can determine one of the cases $x \leq 0$, $x \geq 0$, or $|x| \leq 1/2$.
\end{example}

\subsection{Soundness and program extraction}
\label{subsec:soundness}
%

As we did in the above example, one can extract 
from any CFP-proof of a formula a program that realizes it.
This property is called the Soundness Theorem of realizability. Its
proof is the same as for IFP~\cite{IFP} but extended by 
the rules for the new logical operators whose realizability we proved in 
Sects.~\ref{sub-partial}.
%
\begin{theorem}[Soundness Theorem I]
\label{thm-soundnessI}
%
From a 
$\CFP$-proof 
of a formula $A$ from a set of axioms
one can extract a 
program $M$ 
of type $\tau(A)$ (which is a regular type)
such that $\RIFP$ proves $\ire{M}{A}$ 
from the same axioms. 
%
\end{theorem}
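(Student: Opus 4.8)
The plan is to prove the theorem by induction on the structure of the $\CFP$-derivation, generalised in the usual way to proofs with open assumptions: to each open (non-Harrop) assumption $u$ of a formula $A_u$ I assign a fresh realizer variable $a_u$ of type $\tau(A_u)$, and the strengthened induction hypothesis states that the extracted term $M$, whose free variables lie among the $a_u$, satisfies $\bigwedge_u \ire{a_u}{A_u} \to \ire{M}{A}$, provably in $\RIFP$ from the given axioms. For a proof from axioms only, the axioms are nc-formulas, hence identical to their realizability interpretations and realized by $\Nil$; no realizer variables then remain and $M$ is the closed program asserted by the theorem. Running in parallel with this realizability induction I would maintain a typing induction establishing $\Gamma \vdash M : \tau(A)$, where $\Gamma$ assigns $a_u : \tau(A_u)$.

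For the purely logical rules inherited from $\IFP$ — the natural-deduction rules for $\land,\lor,\to,\forall,\exists,=$ together with $\CL(\Phi)$, $\IND(\Phi,P)$, $\COCL(\Phi)$, $\COIND(\Phi,P)$ — both the extracted program and the realizability verification are literally those of the Soundness Theorem of $\IFP$ in \cite{IFP}, since $\CFP$ keeps these rules unchanged; I would simply cite that result and its typing derivations. The genuinely new cases are the rules for restriction and for concurrency. For each restriction rule (Rest-intro, Rest-bind, Rest-return, Rest-antimon, Rest-mp, Rest-efq, Rest-stab) the extracted realizer and the proof that it realizes the conclusion under the induction hypothesis are exactly the content of Lemma~\ref{lem-strict}; for the concurrency rules (Conc-lem, Conc-return, Conc-mp) they are the content of Lemma~\ref{lem-rea-conc}. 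Thus the logical heart of each new case is already available and only needs to be plugged into the inductive scheme, with the types read off the realizers — noting that the concurrency realizers introduce the constructor $\Amb$ and hence the type former $\Am$, in agreement with $\tau(\Set(A)) = \Am(\tau(A))$. Where those lemmas use classical reasoning (notably Rest-stab and Conc-lem), this is harmless: the conclusion $\ire{M}{A}$ is an $\RIFP$-statement, $\RIFP$ admits classical logic, and, as remarked in Sect.~\ref{sec-pe}, we never need a realizability interpretation of $\RIFP$ itself.

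The main obstacle is the final clause of the statement, that $\tau(A)$ is an $\Am$-regular type. This is not about individual rules but about the global shape of the type translation, and it is where the productivity restrictions on $A\,\rest\,B$ and $\Set(A)$ and the regularity side condition on $\tfix{\alpha}{\rho}$ must be shown to cooperate. Concretely, I would prove by induction on formulas that $\tau(A)$ is always semantically of the form $\Am^k(\rho)$ with $\rho$ neither a fixed-point type nor an $\Am$-type (invoking Lemma~\ref{lem:amtype}), and that every fixed-point type generated by an inductive predicate $\mu(\Phi)$ or coinductive predicate $\nu(\Phi)$ satisfies the regularity condition, in particular is never of the forbidden form $\Am^k(\alpha)$. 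The delicate point is that $\Set$ is the only formula constructor introducing $\Am$, and the productivity requirement on its argument is precisely what prevents an $\Am$ from being fed back through a $\mu$/$\nu$ binder in a way that would destroy regularity; verifying this interaction across the strictly positive operators underlying $\mu(\Phi)$ and $\nu(\Phi)$, and thereby closing the typing induction, is the step that demands the most care.
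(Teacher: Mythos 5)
Your proposal is correct and follows essentially the same route as the paper's own proof, which is just a brief sketch: induction on $\CFP$-derivations, citing the $\IFP$ soundness theorem for the inherited rules and Lemmas~\ref{lem-strict} and~\ref{lem-rea-conc} for the restriction and concurrency rules. Your additional care about open assumptions, the parallel typing induction, and the $\Am$-regularity of $\tau(A)$ fills in detail the paper leaves implicit but does not change the structure of the argument.
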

%
%

%
In $\CFP$, we have a second Soundness Theorem 
which ensures the correctness
of all results of fair computation paths of an extracted program $M$. 
%
More precisely, correctness of $M$ means that all $d\in\ddata(\val{M})$ 
realize the formula $A^-$ obtained from $A$ by deleting all concurrency 
operators $\Set$.
Since $A^-$ is an $\IFP$ formula, the
Theorem relates the realizability interpretations of $\CFP$ and $\IFP$.



\UB{
However, such a correctness result only holds for formulas whose realizers do not
contain $\Amb$ in the scope of a lambda-abstraction.
This restriction is enforced by the following syntactic 
admissibility condition: 
}
\HT{
An expression is called \emph{admissible} if it contains neither
free predicate variables nor restrictions ($\rt{}{}$), and 
all occurrences of concurrency ($\Set$) are strictly positive and
at non-F-position.
Here, \UB{the notion of a} \emph{subexpression at F-position} 
in a $\CFP$ expression is defined inductively by three rules:}
(i) A subexpression of the form $A\to B$ where $A$ and $B$ are both non-Harrop
is at F-position.
\mps{\HT{I changed the order of (ii) and (iii)}}
\HT{
(ii) A subexpression $\munu\,\lambda X\,Q$ ($\munu\in\{\mu,\nu\}$) is at 
F-position if $Q$ has a free occurrence of $X$ at F-position.
(iii) A subexpression within a subexpression at F-Position is at F-position.
}

For example, 
$\Set(\mu(\lambda X\,\lambda x\,(x=0 \lor \forall y\,(\NN(y)\to X(f(x,y))))))$
is admissible, whereas
$\mu(\lambda X\,\lambda x\,\Set(x=0 \lor \forall y\,(\NN(y)\to X(f(x,y)))))$
is not. The predicate $\ConSD$ in Example~\ref{example-ConsSD} is admissible.
%
\begin{theorem}[Faithfulness]
\label{thm-faithfulness}
If $a\in D$ realizes an admissible formula $A$, 
then all $d\in\ddata(a)$ realize $A^-$.
\end{theorem}
Theorems~\ref{thm-soundnessI} and~\ref{thm-faithfulness} imply:
 \begin{theorem}[Soundness Theorem II]
 \label{thm-soundnessII}
 From a $\CFP$ proof of an admissible 
formula $A$ from a set of axioms one can extract a 
program $M :\tau(A)$ such that 
$\RCFP$ proves 
$\forall d\in\ddata(\val{M})\,\ire{d}{A^-}$ from the same set of axioms.
 \end{theorem}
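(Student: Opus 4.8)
The plan is to obtain the statement as a direct combination of Soundness Theorem~I (Theorem~\ref{thm-soundnessI}) and the Faithfulness Lemma (Lemma~\ref{lem-faithfulness}), so that no new extraction machinery is required. First I would apply Theorem~\ref{thm-soundnessI} to the given $\CFP$-proof of the data formula $A$. This yields a program $M:\tau(A)$ together with an $\RIFP$-derivation, from the same axioms, of $\ire{M}{A}$, i.e.\ of $\rea(A)(\val{M})$. This step already produces the program $M$ claimed in the conclusion; in particular its type is $\tau(A)$ as required, and since $A$ is a data formula this type contains no function type.

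Second, I would invoke the Faithfulness Lemma. Since $A$ is a data formula and $\val{M}$ realizes $A$, Lemma~\ref{lem-faithfulness} gives that every $d\in\ddata(\val{M})$ realizes $A^-$, that is, $\forall d\in\ddata(\val{M})\,\ire{d}{A^-}$. The one point that needs care is the provenance of this implication: the conclusion of Soundness Theorem~II is an assertion about $\RIFP$-provability from the axioms, so I must read Faithfulness not as a mere meta-level fact but as the $\RIFP$-derivable implication $\rea(A)(a)\to\forall d\in\ddata(a)\,\rea(A^-)(d)$. Inspecting its proof confirms this: it proceeds by induction on the logical complexity of the predicate, using only the induction and coinduction rules of $\RIFP$ together with the $(\cdot)'$ and $(\cdot)^*$ constructions and the adjunction $P\subseteq Q'\Leftrightarrow P^*\subseteq Q$ in the coinductive case, and any classical reasoning it employs is harmless precisely because we never need a realizability interpretation of $\RIFP$ itself.

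Finally, combining the two $\RIFP$-derivations by modus ponens, with the instantiation $a:=\val{M}$, yields an $\RIFP$-proof of $\forall d\in\ddata(\val{M})\,\ire{d}{A^-}$ from the same axioms $\ax$, which is exactly the conclusion. The main obstacle here is therefore not conceptual but a matter of bookkeeping: checking that the data-formula hypothesis propagates so that $\tau(A)$ is function-type-free and $A^-$ is well defined, and confirming that both ingredient results are available over one and the same axiom set so that the modus ponens step can be carried out inside a single $\RIFP$-derivation.
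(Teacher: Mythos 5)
Your proposal is correct and matches the paper's own argument, which derives Soundness Theorem~II immediately from Theorem~\ref{thm-soundnessI} together with Lemma~\ref{lem-faithfulness}; your additional bookkeeping about reading Faithfulness as an $\RIFP$-derivable implication and discharging it by modus ponens is exactly the intended (if unstated) content of the paper's one-line proof.
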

Thms.~\ref{thm-soundnessII} and~\ref{thm:data}, together with 
and classical soundness (see Sect.~\ref{sub-tarski}), yield:
 \begin{theorem}[Program Extraction]
 \label{thm-pe}
 From a $\CFP$ proof of an admissible 
formula $A$ from a set of axioms
one can extract a 
program $M:\tau(A)$ such that for any 
computation 
$M =  M_0 \newprintp M_1 \newprintp  \ldots$,
$\sqcup_{i \in \NN} (M_i)_{D}$ realizes  $A^-$ in every  
model of the axioms.

\end{theorem}





\section{Application}
\label{sec-gray}
As our main case study, we extract a concurrent conversion program between
two representations of real numbers 
in [-1, 1],
the signed digit representation and infinite Gray code.
In the following, we also write $d:p$ for $\Pair(d,p)$.

The signed digit representation is an extension of the usual binary expansion
that uses the set $\SD \eqdef \{-1, 0, 1\}$ of \emph{signed digits}. 
The following predicate $\C(x)$ expresses coinductively
that $x$ has a signed digit representation.
\begin{eqnarray*}
  \C(x) &\eqnu& |x|\le 1 \land \exists\, d \in \SD\, \C(2x-d)\,,
\end{eqnarray*}
with $\SD(d) \eqdef (d = -1 \lor d = 1) \lor d = 0$.
The type of $\C$ is $\tau(\C) = \stream{\tri}$
where $\tri \eqdef (\one + \one) + \one$ and 
$\stream{\delta} \eqdef  \tfix{\alpha}{\delta \times \alpha}$,
and its realizability interpretation is
\begin{eqnarray*}
  \ire{p}{\C(x)} &\eqnu& |x|\le 1\land\exists\, d\in\SD\,\exists p'\ (
  p = d: p'\ \land 
  \ire{p'}{\C(2x-d)})\, 
\end{eqnarray*}
which expresses indeed that $p$ is a signed digit representation of $x$, that is,
$p = d_0:d_1:\ldots$ with
$d_i\in\SD$ and
$x = \sum_{i}d_i2^{-(i+1)}$.
%
Here, we identified the three digits $d = -1, 1, 0$ with their realizers
$\Left(\Left),    \Left(\Right), \Right$.

Infinite Gray code (\cite{Gianantonio99,Tsuiki02}) is an almost redundancy free
representation of real numbers in [-1, 1] using the partial digits 
$\{-1,1, \bot\}$. A stream 
$p =d_0:d_1:\ldots$ 
of such digits
is an infinite Gray code of $x$ iff 
$d_i = \sgb(\tent^{i}(x))$
 where $\tent$ is the tent function 
$\tent(x) = 1-|2x|$ and $\sgb$ is 
a multi-valued version of the sign function for which 
$\sgb(0)$ is any element of $\{-1, 1,\bot\}$
  (see also Example~\ref{example-ConsSD}).
One easily sees that
$\tent^i(x) = 0$ for at most one $i$. 
Therefore, 
this coding has 
little 
redundancy in that the
code is uniquely determined and total except for at most one digit
which may be undefined. 
Hence, infinite Gray code 
is accessible through concurrent computation with two threads.
%
The coinductive predicate 
\begin{eqnarray*}
\G(x) &\eqnu& |x|\le 1 \land \D(x) \land \G(\tent(x))\,,
\end{eqnarray*}
where $\D$ is the predicate 
$
\D(x) \eqdef  x\neq 0 \to (x\leq 0 \lor x\geq 0)\,
$  
from Example~\ref{ex-d},
expresses that $x$ has an infinite Gray code 
(identifying $-1, 1, \bot$ with $\Left, \Right, \bot$).
Indeed,
$\tau(\G) = \bool^{\omega}$ and
\[
\ire{p}{\G(x)} \eqnu |x|\le 1\land \,\exists d,p'(p = d:p' \land
  (x\neq 0 \to \ire{d}{(x \le 0 \lor x \ge 0)})\ \land 
  \ire{p'}{\G(\tent(x))})\,.
\]

In \cite{IFP}, 
the inclusion $\C \subseteq \G$ was proved in IFP and a sequential conversion
function from signed digit representation to infinite Gray code extracted.
On the other hand, a program producing a signed digit representation from an
infinite Gray code cannot access its input sequentially 
from left to right since it will diverge when it accesses $\bot$.
Therefore, the program needs to evaluate 
two consecutive digits concurrently to obtain at least one
of them. 
With this idea in mind, we define
a concurrent version of $\C$ as
\begin{eqnarray*}
\C_2(x) &\eqnu& |x|\le 1 \land \Set(\exists\, d \in \SD\, \C_2(2x-d))\,
\end{eqnarray*}
with $\tau(\C_2) = \tfix{\alpha}{\Am(\tri \times \alpha)}$ and
prove $\G \subseteq \C_2$ in CFP (Thm.~\ref{thm-g-mc}).
Then we can extract from the proof a concurrent algorithm that 
converts infinite Gray code to signed digit representation.  
Note that, while the formula $\G \subseteq \C_2$ is \emph{not} admissible 
(it contains $\Set$ at an F-position), the formula $\C_2(x)$ \emph{is}.
Therefore, if for some real number $x$ we can prove $\G(x)$,
the proof of $\G \subseteq \C_2$ will give us a proof of $\C_2(x)$ to which
Theorem~\ref{thm-pe} applies. Since $\C_2(x)^-$ is $\C(x)$, this means that 
we have a nondeterministic program all whose fair computation paths will result in a
(deterministic) signed digit representation of $x$.

%
%


Now we carry out the proof of $\G \subseteq \C_2$.
For simplicity, we use 
pattern matching on constructor expressions for defining functions.  
For example, we write
${\mathsf f}\ (a:t) \eqdef M$
for
${\mathsf f} \eqdef \lambda x.\, \caseof{x}\,\{\Pair(a, t) \to M \}.$


The crucial step in the proof is accomplished by
Example~\UB{\ref{example-ConsSD}}, since it yields nondeterministic
information about the first digit of the signed digit representation
of $x$, as expressed by the predicate
\[
\ConSD(x)  \eqdef  \Set((x\leq 0 \lor x\geq 0) \lor |x| \leq 1/2).
\]

\begin{lemma}
\label{lem-gsd}
%
$\G \subseteq \ConSD$.
\end{lemma}
\begin{proof}
$\G(x)$ implies $\D(x)$ and $\D(\tent(x))$, 
and hence $\ConSD$, by Example~\ref{example-ConsSD}.
\end{proof}
The extracted program $\gscomp:\ftyp{\stream{\bool}}{\Am(\tri)}$  
uses the program $\conSD$ defined in Example~\ref{example-ConsSD}:
$$
\gscomp\ (a:b:p)  \eqdef \conSD\  (\Pair(a, b))\,.
$$

We also need the following closure properties of $\G$:
\begin{lemma}
\label{lem-gclosure}
Assume $\G(x)$. Then:
\begin{itemize}
\item[(1)] $\G(\tent(x))$, $\G(|x|)$, and $\G(-x)$;
\item[(2)] if $x \ge 0$, then $\G(2x-1)$ and $\G(1-x)$;
\item[(3)] if $|x|\le 1/2$, then $\G(2x)$.
\end{itemize}
\end{lemma}
\begin{proof}
This follows directly from the definition of $\G$ and 
elementary properties
of the tent function $\tent$.  
The extracted programs consist of simple manipulations of 
the given digit stream realizing $\G(x)$, concerning only its tail and  
first two digits. No nondeterminism is involved.
A detailed proof is in 
Appendix~\ref{appendix-proofs}.
\end{proof}

\begin{theorem}
\label{thm-g-mc}
$\G\subseteq\C_2$.
\end{theorem}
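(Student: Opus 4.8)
The plan is to prove $\G \subseteq \C_2$ by coinduction, exploiting the fact that $\C_2$ is, by definition, the greatest fixed point $\nu(\Phi)$ of the strictly positive operator
\[
  \Phi(X)(x) \eqdef \Set(\exists d \in \SD\,(x \in \II_d \land X(2x-d)))\,.
\]
By the coinduction rule $\COIND(\Phi,\G)$ it suffices to verify the single inclusion $\G \subseteq \Phi(\G)$, that is, the coinductive step
\[
  \forall x\,(\G(x) \to \Set(\exists d \in \SD\,(x \in \II_d \land \G(2x-d))))\,.
\]
So the whole theorem reduces to one implication, with no further induction on the structure of the Gray code needed at this level.

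To establish the step I would fix $x$, assume $\G(x)$, and feed this assumption into the two lemmas already proved. Lemma~\ref{lem-gsd} yields $\Set(\B(x) \lor |x| \leq 1/2)$, i.e.\ a total concurrent witness for the disjunction. Lemma~\ref{lemma-both}, applied to the same assumption $\G(x)$, yields the ordinary implication
\[
  (\B(x) \lor |x| \leq 1/2) \to \exists d \in \SD\,(x \in \II_d \land \G(2x-d))\,.
\]
Writing $A \eqdef \B(x) \lor |x| \leq 1/2$ and $B \eqdef \exists d \in \SD\,(x \in \II_d \land \G(2x-d))$, these are precisely the premises $\Set(A)$ and $A \to B$ of the rule (Conc-mp), whose conclusion $\Set(B)$ is exactly $\Phi(\G)(x)$. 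Discharging $\G(x)$ then gives $\G \subseteq \Phi(\G)$ and hence, by coinduction, $\G \subseteq \C_2$.

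On the computational side I expect the extracted realizer to be obtained by feeding the composite step realizer into the corecursion supplied by the coinduction rule. The step realizer combines $\gscomp$ (the realizer of Lemma~\ref{lem-gsd}), which builds the $\Amb$-pair of the two concurrent threads, with $\onedigit\ a$ (the realizer of Lemma~\ref{lemma-both}) mapped over that pair by $\mapamb$ (the realizer of (Conc-mp)); the monotonicity witness of $\Phi$ then inserts the recursive call on the tail so that the digit stream is produced lazily. The bookkeeping here is routine, so the genuine difficulty of the development does not live in this final theorem but upstream, in Lemma~\ref{lem-gsd}: it is there that the classical disjunction $\neg\neg(x \ne 0 \lor \tent(x) > 0)$ and the rule (Conc-class-orelim-or) are used, and hence there that globally angelic choice and the guarantee that $\bot$ occurs at most once in the Gray code are actually exploited. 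At the level of Theorem~\ref{thm-g-mc} the only things to check carefully are that the shape of $\Phi(\G)(x)$ matches the conclusion of (Conc-mp) on the nose and that productivity of $B$ lets the corecursion proceed without blocking.
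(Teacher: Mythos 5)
Your proposal is correct and follows essentially the same route as the paper: coinduction on $\C_2$ reduces the theorem to $\G(x) \to \Set(A(x))$, which is then obtained by combining Lemma~\ref{lem-gsd} with the implication from Lemma~\ref{lemma-both} via the monotonicity of $\Set$ (the paper invokes this as (Conc-mon), which is just (Conc-mp) with the implication premise, realized by $\mapamb$ exactly as you describe). Your remarks on the extracted realizer and on the real work living in Lemma~\ref{lem-gsd} also match the paper's development.
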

\begin{proof}
By coinduction. Setting 
$A(x) \eqdef \exists d \in \SD\, \G(2x-d)$, 
we have to show 
\begin{equation}
\label{eq-step}
\G(x) \to |x| \leq 1 \land \Set(A(x))\,.
\end{equation}
Assume $\G(x)$. 
Then $\ConSD(x)$, by Lemma \ref{lem-gsd}.
Therefore, it suffices to show 
\begin{equation}
\label{eq-consd}
\ConSD(x)  \to \Set(A(x))\,
\end{equation} 
which, with the help of the rule (Conc-mp), can be reduced to
\begin{equation}
\label{eq-onedigit}
(x \leq 0 \lor x \geq 0 \lor |x| \leq 1/2)   \to A(x).
\end{equation}
(\ref{eq-onedigit}) can be easily shown 
using Lemma~\ref{lem-gclosure}:
If $x\le 0$, then $\tent(x) = 2x+1$. Since $\G(\tent(x))$, we have 
$\G(2x-d)$ for $d=-1$.
If $x \ge 0$, then $\G(2x-d)$ for $d=1$ by (2).
If $|x|\le 1/2$, then $\G(2x-d)$ for $d=0$ by (3).
\end{proof}


The program 
$\onedigit : \ftyp{\stream{\bool}}{\ftyp{\tri}{\tri\times\stream{\bool}}}$
extracted from the proof of (\ref{eq-onedigit}) 
from the assumption $\G(x)$ is
\begin{align*}
\onedigit\ (a:b:p)\ c \eqdef 
&  \caseof{c} \{\Left(d) \to 
                       \caseof{d} \{\\
&\hspace{3.5cm} \Left(\_) \to \Pair(-1, b:p); \\
&\hspace{3.5cm}               \Right(\_) \to \Pair(1,(\nnot\ b):p) \};\\
&\hspace{1.5cm}\Right(\_) \to    \Pair(0, a: (\nh\ p))\}\\
\nnot\ a \eqdef& \caseof{a} \{\Left(\_) \to \Right; \\
&\hspace*{1.6cm}\Right(\_) \to \Left\}\\
\nh\ (a: p) \eqdef& (\nnot\ a): p
\end{align*}
This is lifted to a proof of (\ref{eq-consd}) using $\mapamb$ (the realizer of
(Conc-mp)). Hence the extracted realizer 
${\mathsf s} : \ftyp{\stream{\bool}}{\Am(\tri \times \stream{\bool})}$
of~(\ref{eq-step}) is 
\[
{\mathsf s}\ p \eqdef  \mapamb\ (\onedigit\ p)\  (\gscomp\ p)
\]
%

The main program extracted  from the proof of  Theorem~\ref{thm-g-mc} is
obtained from the step function $\mathsf{s}$ by a special form of recursion, 
commonly known as \emph{coiteration}. Formally, we use the realizer of the
coinduction rule $\COIND(\Phi_{\C_2},\G)$ where $\Phi_{\C_2}$ is the operator 
used to define $\G$ as largest fixed point, i.e.\ 
\begin{align*}
\Phi_{\C_2} &\eqdef \lambda X\,\lambda x\, |x| \leq 1 \land \Set(\exists d \in \SD\, 
       X(2x-d)).
\end{align*}
The realizer of coinduction (whose correctness is shown in~\cite{IFP}) 
also uses a program 
$\mon : \ftyp{(\ftyp{\alpha_X}{\alpha_Y})}{\Am(\ftyp{\tri \times \alpha_X)}{\Am(\tri \times \alpha_Y)}}$ 
extracted from the canonical proof of 
the monotonicity of $\Phi_{\C_2}$:
\begin{align*}
&\mon\ f\ p = \mapamb\ (\mon'\ f)\  p\\ 
&\qquad \hbox{where}\quad
\mon'\ f\ (a:t) = a : f\ t
\end{align*}
Putting everything together, we obtain the 
\emph{infinite Gray code to signed digit representation conversion program} 
$\gtos : \ftyp{\stream{\bool}}{\tfix{\alpha}{\Am(\tri \times \alpha)}}$
\begin{align*}
&\gtos \eqrec (\mon\ \gtos) \circ \mathsf{s}
\end{align*}
%


Using the equational theory of RIFP,
one can simplify $\gtos$ to the following program.
The soundness of RIFP axioms with respect to the denotational semantics and the
adequacy property of our language guarantees that these two programs are equivalent.
%
%
%
\begin{align*}
\gtos\ &(a:b:t) =\Amb(\\
&\ \ (\caseof{a} \{\Left(\_) \to  -1: \gtos\ (b:t);\\
&\ \ \hspace*{1.7cm}                   \Right(\_) \to  1: \gtos((\nnot\ b):t)\}),\\
&\ \        (\caseof{b} \{\Right(\_) \to  0: \gtos(a:(\nh\ t))\})).\\
&\ \   \hspace*{1.7cm}      \Left(\_) \to  \bot\})).
\end{align*}

In \cite{Tsuiki05}, a Gray-code to signed digit conversion program was written
with the locally angelic $\Amb$ operator
that evaluates the first two cells $a$ and $b$  in parallel and continues the computation based
on the value obtained first.  In that program,  if the value of $b$ is first obtained and 
it is $\Left$, then it has to evaluate $a$ again.   With globally angelic choice, as the above program shows, 
one can simply neglect the value 
to use the value of the other thread.  
Globally angelic choice also has the possibility to speed up the computation  if 
the two threads of $\Amb$ are computed in parallel and 
the {\em whole} computation based on the secondly-obtained value of $\Amb$ terminates first.



\section{Implementation}
\label{sec-experiments}
Since our programming language can be viewed as a fragment of Haskell,  
we can execute the extracted program in Haskell by
implementing the Amb operator with the Haskell concurrency module.
We comment on the essential points of the implementation.
\HT{The full code is available from \cite{githubUB}.}

First, we define the domain $D$ as a Haskell data type: 
\begin{verbatim}
data D = Nil | Le D | Ri D | Pair(D, D) | Fun(D -> D) | Amb(D, D)
\end{verbatim}
The $\ssp$-reduction, which preserves the Phase I denotational semantics and reduces a program to a w.h.n.f. 
with the leftmost outermost reduction strategy,  coincides with reduction in Haskell. 
Thus, we can identify extracted programs with programs of type D that compute 
that phase.

The $\newprintc$ reduction that concurrently calculates the arguments of $\Amb$
can be implemented  with the Haskell concurrency module.   
In \cite{BoisPLT02}, the (locally angelic) amb operator 
was implemented in Glasgow Distributed Haskell (GDH).
Here, we implemented it 
with  the Haskell libraries \verb|Control.Concurrent| and \verb|Control.Exception|
as 
a simple function 
\verb|ambL :: [b] -> IO b|
that concurrently evaluates the elements of a list and writes the 
result first obtained in a mutable variable.

Finally, the function 
\verb|ed :: D -> IO D|  produces an element of  $\ddata(a)$  from $a \in D$
by activating \verb|ambL| for the case of $\Amb(a, b)$.
It corresponds to $\newprintp$-reduction though it computes arguments of 
a pair 
sequentially.
This function is nondeterministic since the result of executing \verb|ed (Amb a b)|
depends on which of the arguments \verb|a,b| delivers a result first.
The set of all possible results of \verb|ed a| corresponds to the set $\ddata(a)$.

We executed the program extracted in Section \ref{sec-gray} with
\verb|ed|. 
As we have noted, the number $0$ has three Gray-codes
(i.e., realizers of $\G(0)$):  
$a = \bot\!:\!1\!:\!(-1)^\omega$,
$b = 1\!:\!1\!:\!(-1)^\omega$, and 
$c = -1\!:\!1\!:\!(-1)^\omega$.  On the other hand, 
the set of signed digit representations of $0$ is
$A \cup B \cup C$ where
$A = \{0^\omega\}$,
$B = \{0^k\!:\!1\!:\!(-1)^\omega \mid k \geq 0\}$, 
and
$C = \{0^k\!:\!(-1)\!:\!1^\omega \mid k \geq 0\}$, 
i.e.,
$A \cup B \cup C$ is the set of realizers of $\C(0)$.
One can calculate
$$\gtos (a) = \Amb(\bot, 0\!:\! \Amb(\bot,  0\!:\! \ldots))$$
and $\ddata(\gtos (a)) = A$. Thus 
$\gtos (a)$ is reduced uniquely to $0 \!:\! 0 \!:\! \ldots$
by the operational semantics.
On the other hand,  one can calculate $\ddata(\gtos (b)) = A \cup B$ and
$\ddata(\gtos (c)) = A \cup C$.  They are subsets of 
the set of realizers of $\C(0)$ as Theorem \ref{thm-soundnessII} says, and
$\gtos (b)$ is reduced to an element of $A \cup B$ 
as Theorem \ref{thm-pe} says.

We wrote a program that produces a $\{-1, 1, \bot\}$-sequence 
with the speed of computation of each digit ($-1$ and $1$) be controlled.  
Then, apply it to \verb|gtos| and then to \verb|ed| to obtain expected results. 

\section{Conclusion}
\label{sec-conclusion}

We introduced the logical system $\CFP$ by extending $\IFP$ \cite{IFP}
with two propositional operators $\rt{A}{B}$ and $\Set(A)$,
and developed a method for
extracting nondeterministic and concurrent 
programs that are provably total and  satisfy
their specifications. 


While $\IFP$ already imports classical logic through nc-axioms that
need only be true classically, in $\CFP$ the access to classical logic
is considerably widened through the rule (Conc-lem) which,
when interpreting $\rt{A}{B}$ as $A \to B$ and identifying $\Set(A)$ with $A$,
is constructively 
invalid but has nontrivial 
\HT{nondeterministic}
computational content.

We applied our system to extract a 
\HT{concurrent}
translation from infinite Gray 
code to the signed digit representation, thus demonstrating that this approach
not only is about program extraction `in principle' but can be used to
solve nontrivial 
\HT{concurrent}
computation problems through 
program extraction.
%

After an overview of related work, 
we conclude with some 
ideas
for follow-up research.

\subsection{Related work}
\label{sub-related}
The CSL 2016 paper \cite{BergerCSL16} 
is an early attempt to capture concurrency via program extraction 
and can be seen as the starting point of our work. Our main advances, compared 
to that paper, are that
it is formalized as a logic for concurrent execution of partial programs
by a globally angelic choice operator which is formalized by introducing a new connective $B|_A$, and
 that we are able to express bounded nondeterminism with complete
 control of the number of threads while
\cite{BergerCSL16} 
 modelled nondeterminism with
countably infinite branching, which is unsuitable 
or an overkill for most applications. Furthermore, our approach has a typing discipline,
a sound and complete small-step reduction,  and has 
 the ability to switch 
between global and local nondeterminism (see Sect.~\ref{sub-local} below).


As for the study of angelic nondeterminism,  it is not easy to develop a denotational 
semantics as we noted in Section \ref{sec-ang}, and
it has been mainly studied from the operational point of view,
e.g.,  notions of equivalence or refinement of processes and associated proof methods, which are all fundamental for correctness and termination
\cite{LassenMoran99,MoranSandsCarlsson2003,Lassen2006,sabel_schmidt-schauss_2008,CarayolHirschkoffSangiori2005,Levy07}.
Regarding 
imperative languages, 
Hoare logic and its extensions have been applied to nondeterminism and proving 
totality from the very beginning (\cite{Apt2019FiftyYO} is a good survey on this subject).
\cite{Mamouras15} studies angelic nondeterminism with an extension of Hoare Logic.

There are many logical approaches to concurrency.
An example is an approach based
on extensions of Reynolds' separation logic~\cite{Reynolds:2002} to  
the concurrent and higher-order setting~\cite{OHearn07,Brookes07,Jungetal18}.
Logics for session types
and process calculi ~\cite{Wadler14a,CairesPfenningToninho16,Kouzapasetal16}
form another approach
that is oriented more towards the 
formulae-as-types/proofs-as-programs~\cite{Howard80,Wadler14} or rather
proofs-as-processes paradigm~\cite{Abramsky94}.
All these approaches provide highly specialized logics and expression languages
that are able to model and reason about concurrent programs with a fine control 
of memory and access management and complex communication patterns.
%


\subsection{Modelling locally angelic choice} 
\label{sub-local}
We remarked earlier that our interpretation of $\Amb$ corresponds to
\emph{globally} angelic choice. Surprisingly, \emph{locally} angelic choice
can be modelled by a slight modification of the restriction and 
the total concurrency
operators: We simply replace $A$ by the logically equivalent formula
$A \lor \False$, more precisely, we set
$\rtp{A}{B} \eqdef \rt{A}{(B\lor\False)}$ and
$\Set'(A) \eqdef \Set(A\lor\False)$.
Then the proof rules in 
Sect.~\ref{sec-cfp}
with $\rt{}{}$ and $\Set$ replaced by $\rtp{}{}$ and
$\Set'$, respectively but without the 
strictness
condition, are theorems of $\CFP$. 
To see that the operator $\Set'$ indeed corresponds to locally angelic choice
it is best to compare the realizers of the rule (Conc-mp) for $\Set$ and $\Set'$.
Assume $A$, $B$ are non-Harrop and $f$ is a realizer of $A \to B$.
Then, 
if $\Amb(a,b)$ realizes $\Set(A)$, 
then $\Amb(\strictapp{f}{a}, \strictapp{f}{b})$ realizes $\Set(B)$.
This means that to choose, say, the left argument of $\Amb$ as a result,
$a$ must terminate and so must the ambient (global) computation 
$\strictapp{f}{a}$.
On the other hand, the program extracted from the proof of (Conc-mp) for
$\Set'$ takes a realizer $\Amb(a,b)$ of $\Set'(A)$ and returns 
$\Amb(\strictapp{(\aup \circ f \circ \adown)}{a}, \strictapp{(\aup \circ f \circ \adown)}{b})$
as realizer of $\Set'(B)$, 
where $\aup$ and $\adown$ are the realizers of $B \to (B\lor\False)$ and 
$(A \lor\False) \to A$, 
namely,
$\aup \eqdef \lambda a.\, \Left(a)$ and
$\adown \eqdef \lambda c.\, \caseof{c}\{\Left(a) \to a\}$.
%
Now, to choose the left argument of $\Amb$, 
it is enough for $a$ to terminate since the non-strict operation $\aup$
will immediately produce a w.h.n.f. 
without invoking the ambient computation.
%
%
By redefining realizers of $\rt{A}{B}$ and $\Set(A)$ as realizers of 
$\rtp{A}{B}$ and $\Set'(A)$ and 
the realizers of the rules of $\CFP$ as those extracted from the proofs 
of the corresponding rules for $\rtp{}{}$ and $\Set'$,
we have another realizability interpretation of CFP that models
locally angelic choice.  


\subsection{Markov's principle with restriction}
\label{sub-markov}
So far, (Rest-intro) is the only rule that derives a restriction in a 
non-trivial way. 
However, there are other such rules,
for example
\begin{center}
\AxiomC{$\forall x \in \NN (P(x) \lor \neg P(x))$}
\RightLabel{Rest-Markov}
\UnaryInfC{$\rt{\exists x \in\NN\,P(x)}{\exists x \in\NN\,P(x)}$}
            \DisplayProof \ \ \ \ 
          \end{center}
%
If $P(x)$ is Harrop, then (Rest-Markov) 
is realized by minimization.
More precisely, if $f $ realizes $\forall x \in \NN (P(x) \lor \neg P(x))$,
then $\min(f)$ realizes the formula  
$\rt{\exists x \in\NN\, P(x)}{\exists x \in\NN\,P(x)}$,
where $\min(f)$ computes the least $k \in \NN$ such that $f\, k = \Left$
if such $k$ exists, and does not terminate, otherwise. 
One might expect as conclusion of (Rest-Markov) the formula
$\rt{(\neg\neg\exists x \in\NN\,P(x))}{\exists x \in\NN\,P(x)}$.  
However, because of (Rest-stab) (which is realized by the identity), 
this wouldn't make a difference.
The rule (Rest-Markov) can be used, for example, to prove that
Harrop predicates that are recursively enumerable (re) and 
have re complements are decidable. 
From the proof one can extract a program 
that 
concurrently searches for evidence of membership in the predicate and
its complement.
%



\subsection{Further directions for research}
\label{sub-further}
\mps{\HT{Is it possible to refer to the Gaussian paper we have just written?}
\UB{I cited it as 'to appear'.}}
The undecidability of equality of real numbers, which is 
at the heart of
our case study on infinite Gray code, is also a critical point in 
Gaussian elimination where one needs to find a non-zero entry in a non-singular
matrix. As shown in~\cite{BergerSeisenbergerSpreenTsuiki22}, 
our approach makes it possible to search for such `pivot elements'
in a concurrent way. 
%
A further promising research direction is to extend the work on coinductive
presentations of compact sets in~\cite{Spreen20} to the concurrent setting.
%



\subsubsection*{Acknowledgements}
This work was supported by 
IRSES Nr.~612638 CORCON and Nr.~294962 COMPUTAL of
  the European Commission, the JSPS Core-to-Core Program, A. Advanced
  research Networks and JSPS KAKENHI  
15K00015 as well as 
  the Marie Curie RISE project CID (H2020-MSCA-RISE-2016-731143).





\bibliography{../../bibandmac/refs}

\vfill




\newpage

\section*{Appendix}

\appendix

\section{Proofs}
\label{appendix-proofs}
\textbf{Lemma \ref{lem:ddatabot}.}
If $a \in D$ belongs to a regular type, then the following are equivalent:
  (1) $a \in\{ \bot,\Amb(\bot, \bot)\}$;   
  (2) $\{\bot\} = \ddata(a)$;
  (3) $\bot \in \ddata(a)$.

\begin{proof}
By the definition of $\ddata(a)$, clearly (1) implies (2). Since (2) implies (3)
trivially, it only remains to show that (3) implies (1), and it is only for 
this implication that we need the assumption that $a$ belongs to a regular type.
Assume $\bot\in\ddata(a)$. Then, clearly $a=\bot$, or $a=\Amb(a',b')$.
In the first case we are done. In the second case we need to show $a'=b'=\bot$.
Assume this is not the case. Then, w.l.o.g.\ $a'\neq\bot$ and $\bot\in\ddata(a')$.
But this implies that $a'$ must be of the form $\Amb(\_,\_)$, which, however, is
impossible, since $a$ belongs to a regular type.
\end{proof}
\bigskip


\noindent
\textbf{Lemma \ref{lem:ssp}.}
Let $M$ be a closed program.
\begin{enumerate}
\item[(1)] $\ssp$ is deterministic (i.e., $M \ssp M'$ for at most one $M'$).
\item[(2)] $\ssp$ preserves the denotational semantics (i.e., $\val{M} = \val{M'}$
if $M \ssp M'$).
\item[(3)] $M$ is a $\ssp$-normal form iff $M$ is a  
w.h.n.f.  
\item[(4)] [Adequacy Lemma]
  If $\val{M} \ne \bot$, then there is a 
w.h.n.f.~$V$ 
s.t.\ $M \ssp^* V$.
\end{enumerate}

\begin{proof}
   (1) to (3) are easy.
   The proof of (4) is standard and is 
   as the proof of Lemma~33 in \cite{IFP} 
   for the case that $M$ begins with a constructor, 
   and an easy consequence of Lemma~32 in \cite{IFP}
   for the case that $M$ is a $\lambda$-abstraction. 
\end{proof}
\bigskip

\mps{\HT{Has to be completely rewritten. Do it later.}}
\noindent
\textbf{Theorem \ref{thm:data}} (Computational Adequacy: Soundness).  
%
For every computation  $M =  M_0 \newprintp M_1 \newprintp  \ldots$, 
$\sqcup_{i \in \NN} (M_i)_{D} \in \ddata(\val{M})$.

\begin{proof}

Set $P(d, a) \eqdef d = \sqcup_{i \in \NN} (M_i)_D$ for some
computation
$M_1 \newprintp M_2 \newprintp\ldots$ with $a = \val{M_1}$.
  We show $P(d, a) \to d \in \ddata(a)$, by coinduction.
  Therefore, we have to show $P(d, a) \to \Phi(P)(d, a)$ where
 the predicate $\Phi(X)$ is obtained by replacing $\ddata$ 
 with $X$ on the right hand side of the definition of $\ddata$.
 Assume $P(d,a)$, witnessed by the computation, that is, 
fair reduction sequence 
$M = M_0 \newprintp M_1 \newprintp \ldots$ with $a = \val{M}$
and $d = \sqcup_{i \in \NN} (M_i)_{D}$.
 We have to show that at least one of the following six conditions holds:
\begin{itemize}
\item[(1)] $a = \Amb(a', b') \land a' \ne \bot \land P(d,a')$
\item[(2)] $a = \Amb(a', b') \land b' \ne \bot \land P(d,b')$
\item[(3)] $a = \Amb(\bot, \bot) \land d = \bot$
\item[(4)] $ a =  C(\vec{a'}) \land d = C(\vec{d'}) \land \bigwedge_i P(d'_i,a'_i)$ for some
  $C \in \mathrm{C_d}$.
\item[(5)] $a = \Fun(f)  \land d = a$
\item[(6)] $a = d = \bot$
\end{itemize}
A computation $M = M_0 \newprintp M_1 \newprintp \ldots$ belongs to one the following categories:

Case a: All reductions $M_i \newprintp M_{i+1}$ are (p-i) derived from (c-i):
That is, $M_0 \ssp M_1 \ssp \ldots$. In this case,  $d = \bot$ and (6) holds by Lemma \ref{lem:ssp} (2) and (4).

Case b: For some $n$, $M_i \ssp M_{i+1}$  ($i< n$) and
$M_i \newprintc M_{i+1}$  ($n \leq i$) by (c-ii) and (c-ii'):
In this case, $M_i = \Amb(L_i, R_i)$ for $i \geq n$.
We have $L_i \ssp L_{i+1}$ and $R_i = R_{i+1}$ or 
$L_i = L_{i+1}$ and $R_i \ssp R_{i+1}$.  By fairness, both happens infinitely and therefore
(3) holds by Lemma \ref{lem:ssp} (2) and (4).

Case c:
$M_i \ssp M_{i+1}$ for $i< n$  and $M_i \newprintp M_{i+1}$ for $i \geq n$ by (p-iii):
(5) holds by Lemma \ref{lem:ssp}(2).

Case d:
$M_i \ssp M_{i+1}$ for $i< n$ and $M_i \newprintp M_{i+1}$ for $i \geq n$ by (p-ii):
$M_{i}$ has the form $C(N_{i,1},\ldots,N_{i,k})$ for $i \geq n$ 
and $N_{n,j} \newprintp N_{n+1,j} \newprintp \ldots$ are fair reductions.  In addition, 
$a = C(\val{N_{i,1}},\ldots,\val{N_{i,k}})$ and 
$d = C(\sqcup_{i \in \NN} ({N_{i,1}})_D,\ldots, \sqcup_{i \in \NN} (N_{i,k})_D)$.  Therefore, (4) holds.

Case e:
$M_i \ssp M_{i+1}$ for $i< n$, $M_n = \Amb(L,R)$, 
$M_{i} \newprintc M_{i+1}$ for $n \leq i< m$ by (c-ii) and (c-ii'),
$M_{m} \newprintc M_{m+1}$ by (c-iii) or (c-iii'):
$a = \Amb(a',b')$ with $a' = \val{L}$ and 
$b' = \val{R}$. $M_m = \Amb(L',R')$.
If (c-iii) is used, $M_{m+1} = L'$ with 
$a' = \val{L} = \val{L'} \neq \bot$.
Since the reduction sequence 
$M_{m+1} \newprintp M_{m+2} \newprintp \ldots$ is fair again,
$P(d,a')$ and hence (1) holds.  
Similarly, (2) holds for the case (c-iii') is used.

\end{proof}
\bigskip

\mps{\HT{need to check the proof}}
\noindent
\textbf{Theorem \ref{thm:dataconv}} (Computational Adequacy: Completeness).
If $M$ has a regular type, then
for every $d \in \ddata(\val{M})$, there is a computation 
$M = M_0 \newprintp M_1 \newprintp \ldots$ with
$d = \sqcup_{i \in \NN} ((M_i)_{D})$.
\begin{proof}
First, we observe that every 
regular type is semantically equal to  
a type 
\HT{$\Am(\rho)$ or $\rho$ where $\rho$ is neither a fixed point type nor 
of the form $\Am(\sigma)$.}
This follows from an easy modification of Lemma 7 in~\cite{IFP}.
The key to the proof of the theorem is the following

\begin{quote}
\emph{Claim.}
  Let $e$ be a 
compact
element of $D$.
If $M$ is a 
program 
with $M : \tau$ for some 
regular
type $\tau$,
$d \in \ddata(\val{M})$ and $e \sqsubseteq d$, then there exists 
$M'$ such that 
$M \newprintp^* M'$, $d \in \ddata(\val{M'})$, and 
$e \sqsubseteq M'_{D} \sqsubseteq d$.
\end{quote}

\noindent
\emph{Proof of the Claim.}
%
%
\HT{Induction} on the rank of $e$ 
where the \emph{rank} of a 
finite element of $D$ is defined as 
$\rk(\bot) = \rk(\Fun(f)) = 0$ and 
$\rk(C(\vec{a})) = 1+ \max (\rk(\vec{a}))$ (see also~\cite{IFP}). 
\mps{\HT{Now, $k = 0, 1$ so simple induction is enough.}}
%
\HT{
$\tval{\tau}{} = \tval{\Am^k(\rho)}{}$ 
where $k \in\{0, 1\}$ and
$\rho$ is neither a fixed point type nor of the form $\Am(\rho')$.}

\emph{Case $e = \bot$}. Then the assertion holds with $M' = M$, 
since clearly $M_{D} \sqsubseteq d$
for all $d \in\ddata(\val{M})$ (induction on $M$). 

\emph{Case $e = C(\vec{e'})$ with $C \ne \Amb$}. 
Then $d = C(\vec{d'})$ with $d_i' \sqsupseteq e_i'$.

If $k=0$, then $\tau$ is semantically equal to a type of the form 
$\one$,  $\rho_1+ \rho_2$ or $\rho_1\times\rho_2$,
and therefore $\val{M}$ has the form $C(\vec{a'})$. 
By the Adequacy Lemma (Lemma~\ref{lem:ssp}~(4)), 
$M \ssp^* C(\vec{M'})$ for some $\vec{M'}$
and $d \in \ddata(\val{M}) = \ddata(C(\vec{\val{M'}}))$.  
Therefore, by the definition of $\ddata$, $d_i' \in \ddata(\val{M'_i})$.  
Since the ranks of the $e_i'$ are smaller than that of $e$, 
by 
\HT{induction} hypothesis, there exists $\vec{M''}$ such that 
$M'_i \newprintp^* M''_i$, $d_i' \in \ddata(\val{M''_i})$ 
and $e_i' \sqsubseteq ({M_i}'')_{D} \sqsubseteq d_i'$.
Therefore, $C(\vec{M'}) \newprintp^* C(\vec{M''})$ by (p-ii),  
$d \in \ddata(\val{C(\vec{M''})})$, and
$e \sqsubseteq C(\vec{M''})_{D} \sqsubseteq d$.
Since $M \newprintp C(\vec{M'})$, we are done.

If \HT{$k = 1$}, 
then $\val{M}$ has the form $\Amb(a', b') $.
\HT{Since $d \sqsupseteq e \ne \bot$, }
$a' \ne \bot \land d \in \ddata(a')$
or $b' \ne \bot \land d \in \ddata(b')$.
By the Adequacy Lemma, $M \ssp^* \Amb(N_1, N_2)$.
If   $\defined{\val{N_1}} \land d \in \ddata(\val{N_1})$ then 
$N_1 \ssp^* K$ for some w.h.n.f.\ $K$ and therefore
$M \newprintc^* K$ by applying (c-i), (c-ii), and (c-iii),
and thus $M \newprintp^* K$ by (p-i).
Note that 
\HT{$K : \Am^{0}(\rho)$} and
$d \in \ddata(\val{K})$.  Therefore, 
there exists $K'$ such that $K \newprintp^* K'$,
$d \in \ddata(\val{K'})$, and $e \sqsubseteq K'_{D} \sqsubseteq d$.
Since $M \newprintp^* K'$, we have the result.

\emph{Case $e =\Fun(f)$}. 
If $k=0$, then $\val{M} = \Fun(f)$ and therefore $d = \Fun(f)$.
Furthermore, by the Adequacy Lemma, 
$M \ssp^*M'$ for some $M'$ in w.h.n.f. Since $\val{M'} = \val{M} = \Fun(f)$,
$M'$ is a $\lambda$-abstraction and hence $M'_{D} = \val{M'}$.
It follows that $e = M'_{D} = d$.
If \HT{$k = 1$} 
the same argument as in the case $e = C(\vec{e'})$ applies. 
This completes the proof of the Claim.

To prove the Theorem,  let $d_0 \sqsubseteq d_1 \sqsubseteq \ldots$ 
be an infinite sequence of 
  compact
  approximations of
  $d$ such that $d = \sqcup_i d_i$.  
  We construct a sequence $(M_i)_{i\in \NN}$ 
  such that $d \in \ddata(\val{M_i})$ and $M_i$ has a regular type as follows.
  Let $M_0 = M$.  By applying the Claim to 
  $d_i$, $d$ and $M_i$, we have $M_{i+1}$ such that $M_i \newprintp^* M_{i+1}$ 
  (hence, clearly, $M_{i+1}$ has a regular type as well),
  $d \in \ddata(\val{M_{i+1}})$, and 
  $d_i \sqsubseteq (M_{i+1})_{D} \sqsubseteq d$.
  By concatenating the reduction sequences, we have an infinite sequence
  $M = N_0 \newprintp N_1 \newprintp \ldots$ such that 
 $d = \sqcup_{i \in \NN} ((N_i)_{D})$.
\end{proof}
\bigskip

\noindent
\textbf{Corollary~\ref{cor:ddatabot}}.
For a program $M$ of regular type, the following 
are equivalent.
  \begin{enumerate}
   \item[(1)] One of the  computations of $M$ is productive.
   \item[(2)] All 
computations  of $M$ are productive.
   \item[(3)] $\val{M}$ is neither $\bot$ nor $\Amb(\bot,\bot)$.
   \end{enumerate}
 
\begin{proof}

Clearly, every program has a fair $\newprintp$-reduction sequences. Therefore, (2) implies
(1). 
Next, assume (1). Then, by Thm~\ref{thm:data}, $\ddata(\val{M})$ must contain a
non-bottom element. By Lemma~\ref{lem:ddatabot}, (3) holds.
Finally, if (3) holds, then by Lemma~\ref{lem:ddatabot}, $\bot\not\in\ddata(\val{M})$.
With \HT{Thm.~\ref{thm:data}} 
 it follows that every $\newprintp$-reduction sequences 
of $M$ must reduce to a deterministic w.h.n.f.
\end{proof}
\bigskip

\noindent
\textbf{Lemma~\ref{lem-restrict}}.
The rules for restriction and concurrency are realizable.
\begin{proof}\quad\\

\noindent\emph{Rest-intro}:
\ \raisebox{-0.2cm}{$\infer[\hbox{($A, B_0, B_1$ Harrop)}]{
        \ire{(\leftright\ b)}{\rt{A}{(B_0 \vee B_1)}}
}{
\ire{b}{(A \to (B_0 \vee B_1))} \ \ \     \reah({\neg A \to B_0 \wedge B_1})
}
$}.

$\ire{b}{(A \to (B_0 \vee B_1))}$ means $b:\tau(B_0\lor B_1)$ and
$\reah(A) \to \ire{b}{(B_0 \vee B_1)}$, and
$\reah(\neg A \to B_0 \wedge B_1) \equiv 
\neg \reah(A) \to \reah(B_0) \wedge \reah(B_1)$.
We claim that $\rt{A}{(B_0 \vee B_1)}$ 
is realized by 
$$\leftright\ b = 
\caseof{b} \{\Left(\_) \to \Left, \Right(\_) \to \Right\}.
$$
%
%
%
Assume $\re\,A$, that is, $\reah(A)$. 
Then $b$ realizes $B_0 \lor B_1$. 
Hence $b\in\{\Left,\Right\}$ and therefore $\defined{b}$
and thus $\leftright\ b \ne \bot$.
Now assume $\leftright\ b \ne \bot$.
We do a (classical) case analysis on whether or not $\reah(A)$ holds.
If $\reah(A)$, then $\ire{b}{(B_0 \vee B_1)}$.
If $\neg \reah(A)$, then $\reah(B_0)$ and $\reah(B_1)$. Hence,
$\Left$ and $\Right$ both realize $B_0 \lor B_1$.
Since $b:\tau(B_0\lor B_1)$ and $\defined{b}$,
$b \in \{\Left, \Right \}$.   Therefore, $\ire{b}{(B_0 \lor B_1)}$.

\medbreak\noindent
\emph{Rest-return}:\ \raisebox{-0.2cm}{$\infer{  
 \ire{a}{\rt{A}{B}}
}{
  \ire{a}{B}
}$} for strict $B$.

Since $B$ is strict, $\ire{a}{B}$ implies $a \ne \bot$.
Therefore, clearly $\ire{a}{\rt{A}{B}}$.

\medbreak
\noindent
\emph{Rest-bind}
:\ \raisebox{-0.2cm}{
\infer{
      \ire{(\strictapp{f}{a})}{\rt{A}{B'}}
}{
\ire{a}{ \rt{A}{B}}\ \ \          \ire{f}{( B \to (\rt{A}{B'}))}
}
} for strict $B$, $B'$, non-Harrop $B$.

We have $\forall c\,(\ire{c}{B} \to \ire{(f\,c)}{\rt{A}{B'})})$.
If $\re\,A$ then $\defined{a}$ and  $\ire{a}{B}$, and therefore $f\, a$ 
realizes $\rt{A}{B'}$.
Therefore $\defined{f\, a}$ because $\re\,A$.
Note that $\strictapp{f}{a} = f\, a$ because $\defined{a}$.
If $\defined{\strictapp{f}{a}}$, then $\defined{a}$.
Since $\defined{a}$ and $\ire{a}{\rt{A}{B}}$,  we have $\ire{a}{B}$.
Therefore,  $\strictapp{f}{a} = f\, a$ realizes $\rt{A}{B'}$.
If $B$ is Harrop,  
then
$a\,\seq\,b$ realizes $\rt{A}{B'}$ with a similar argument.

\medbreak
\noindent \emph{Rest-antimon}: \ \raisebox{-0.2cm}{
$\infer{
    \ire{a}{\rt{A'}{B}}
    }{
      \ire{}{(A' \to A)} \ \ \ \ire{a}{\rt{A}{B}  }}$} for strict $B$.

Clearly, $\ire{a}{\rt{A'}{B}}$
since $\re\,A'$ implies $\re\,A$.

\medbreak
\noindent \emph{Rest-mp}: \ \raisebox{-0.2cm}{
$  \infer{
    \ire{b}{B}
}{
\ire{b}{\rt{A}{B}} \ \ \    \ire{}{A}
}
$} for strict $B$.

Clear from the definition of $\ire{b}{\rt{A}{B}}$.

\medbreak
\noindent \emph{Rest-efq}: \ {
$\ire{\bot}{\rt{\False}{B}} $} for strict $B$.

Clear.

\medbreak
\noindent \emph{Rest-stab}: \ \raisebox{-0.2cm}{
$
\infer{
    \ire{b}{\rt{\neg\neg A}{B}}
    }{
    \ire{b}{\rt{A}{B}}}
$} for strict $B$.

We use classical logic.
If $\re\,\neg\neg A$, then $\neg\neg \re\, A$, hence
$\re\,A$. Thus $\defined{b}$. If $\defined{b}$, then 
$\ire{b}{B}$.

\medbreak
\noindent \emph{Conc-lem}: \ \raisebox{-0.2cm}{
$
  \infer{
  \ire{\Amb(a,b)}{\Set(B)}
}{
\ire{a}{\rt{A}{B}}   \ \ \ \     \ire{b}{\rt{\neg A}{B}}
}$} for strict $B$.

By classical logic
$\re\,A$ or $\neg(\re\,A)$ i.e.\ $\re\,(\neg A)$. In the first case
$a \ne \bot$  and in the second case $b \ne \bot$. Further, if $a \ne \bot$,
then $a$ is a realizer of 
$B$ because $a$ realizes $\rt{A}{B}$. Similarly for $b$.

\medbreak
\noindent \emph{Conc-return}: \ \raisebox{-0.2cm}{
$  \infer{\   \ire{\Amb(a,\bot)}{\Set(A)}
}{
\ire{a}{A}
}$}\ for strict $A$.

Clear.

\medbreak
\noindent \emph{Conc-return}: \ \raisebox{-0.2cm}{
$  \infer{\
\ire{\mapamb(f, c)}{\Set(B)}
}{
  \ire{f}{(A\to B)}\ \ \  \ire{c}{\Set(A) }
}
$}\ for strict $A$, $B$, non-Harrop $A$.

We show that
$$\mapamb(f, c) =  \caseof{c}
  \{\Amb(a,b) \to 
   \Amb(\strictapp{f}{a}, \strictapp{f}{b})\}$$
realizes $\Set(B)$.

If $\ire{\Amb(a, b)}{\Set(A)}$, then 
$\defined{a} \lor \defined{b}$.  If $\defined{a}$, then
$\strictapp{f}{a} = f\ a$ and $\ire{a}{A}$, therefore
$\ire{(\strictapp{f}{a})}{B}$.  Since $B$ is strict, we have
$\defined{\strictapp{f}{a}}$.   In the same way, 
if $\defined{b}$ then $\defined{\strictapp{f}{b}}$.
Therefore, we have $\defined{\strictapp{f}{a}} \lor \defined{\strictapp{f}{b}}$.  If $\defined{\strictapp{f}{a}}$, then $\defined{a}$ and thus
$\ire{(\strictapp{f}{a})}{B}$ as we have observed.
If  $A$ is Harrop,  then, since clearly $\re(A)$,
it is realized by $\Amb(f,\bot)$.

The cases where the conclusions of the rules are Harrop formulas are easy.
%
\end{proof}

\noindent
\textbf{Lemma~\ref{class-orelim-sleep}}.
$\CFP$ derives the following rules. 
%
\begin{itemize}
 \item[(1)]
\raisebox{-0.2cm}{
$  \infer[]{
   \ire{\Amb(\strictapp{\Left}{a},\strictapp{\Right}{b})}{\Set(B_0\lor B_1)}
 }{
    \ire{a}{\rt{A_0}{B_0}} \ \ \ \  \ire{b}{\rt{A_1}{B_1}}   \ \ \ \  \reah{(\neg\neg(A_0\lor A_1))}
 }$
}\\
\item[(2)]
\raisebox{-0.2cm}{
$
  \infer[]{
   \ire{\caseof{a} \{\Left(\_) \to \bot;\Right(b) \to b\}}{\rt{D\land\neg B}{C}}
 }{
    \ire{a}{\rt{D}{(B \lor C)}}
 }$}
\qquad {($C$ strict)}

\end{itemize}

\begin{proof}
  (1) By (Rest-mon) and (Rest-return),
  we have $(B|_A \land (B \to B')) \to B'|_A$.
Therefore, $B_0|_{A_0}$ implies $(B_0 \lor B_1)|_{A_0}$ and
$B_1|_{A_1}$ implies $(B_0 \lor B_1)|_{A_1}$.
On the other hand, 
using the rule (Rest-antimon) and (Rest-stab), 
$$\rt{A_0}{B} \land \rt{A_1}{B} \land \neg \neg{(A_0 \lor A_1)} \to \Set(B)$$ is derived from (Conc-lem).  Therefore,
$\Set(B_0 \lor B_1)$ is derived.

(2)
We first prove $B \lor C \to \rt{\neg B}{C}$.
Suppose that $B \lor C$.  If $B$, then $\rt{\neg B}{C}$ by (Rest-efq)  and (Rest-antimon). If $C$, then $\rt{\neg B}{C}$ by (Rest-return).

Now suppose $\rt{D}{(B \lor C)}$. 
By (Rest-antimon), we have $\rt{D \land \neg B}{(B \lor C)}$.
By (Rest-antimon) and $(B \lor C) \to \rt{\neg B}{C}$, we have $(B \lor C) \to 
\rt{D \land \neg B}{C}$.
Therefore, by (Rest-bind), we have $\rt{D \land \neg B}{C}$.
\end{proof}


\noindent
\textbf{Theorem~\ref{thm-faithfulness}} (Faithfulness).
If $a\in D$ realizes an admissible formula $A$, 
then all $d\in\ddata(a)$ realize $A^-$.

\begin{proof}
For any $\RCFP$ predicate $P$ whose last argument place ranges over $D$
we define a predicate $P'$ of the same arity
by
\[P'(\vec x,a) \eqdef 
                  \forall d\in\ddata(a)\, P(\vec x,d)\,.\]
We extend this to $\RCFP$ 
predicate substitutions by setting 
$\theta'(X) := (\theta(X))'$.

In the following, we consider only s.p. subexpression of the fixed 
closed admissible formula $A$ given in the theorem. 
The notion of a subexpression at F-position always 
refers to subexpressions of $A$. 

Let $P$ be a s.p.\ subexpression of $A$ all whose free predicate
variables are at non-F-position, and suppose that 
$\zeta, \theta$  are $\RCFP$ predicate substitutions such that
$\zeta \subseteq \theta'$ (that is, $\zeta(X) \subseteq (\theta(X))'$)
and the common domain of $\zeta$ and $\theta$ contains all free predicate 
variables of $\rea(P)$.
We show that 
\begin{equation}
\label{eq-faithfulness} 
\rea(P)\zeta\subseteq(\rea(P^-)\theta)' \, 
\end{equation}
This implies the theorem, since,
applying (\ref{eq-faithfulness}) to $A$
(with empty substitutions), we get
$\rea(A) \subseteq \rea(A^-)'$,
that is, all $d\in\ddata(a)$ realize $A^-$.

The proof of (\ref{eq-faithfulness}) is by 
structural induction on $P$.

If $P$ is Harrop, then $P$ contains no $\Set$ and 
no free predicate variables.
Therefore, $P^-$ is $P$ and (\ref{eq-faithfulness}) reduces to 
$ a =\Nil\land\reah(P) \to \forall d\in\ddata(a)\,(d=\Nil \land\reah(P))$, 
which is a triviality since $\ddata(\Nil)=\{\Nil\}$.

If $P$ is a subexpression at F-position, then it contains neither $\Set$
nor free predicate variables. Therefore, 
$\zeta$ and $\theta$ are empty, and (\ref{eq-faithfulness}) means
$\rea(P)\subseteq(\rea(P))'$. 
\mps{\UB{It might be worth checking this.}}
Since $P$ does not contain $\Set$,
the closed regular type $\tau(P)$ does not contain $\Am$ and 
from that it follows that
$\ddata(a)=\{a\}$ for all $a\in\val{\tau(P)}$. Therefore 
$(\rea(P))'=\rea(P)$.

Otherwise, we only look at the cases $A \to B$, 
$\Set(A)$, $X$ and $\munu\,\Phi$, 
since the other cases are easy or similar. 

If $P$ is an implication $B \to C$, then $B$ must be Harrop and
$C$ non-Harrop (since $P$ is non-Harrop and not at F-position),   
hence $(B \to C)^- = B \to C^-$.
$(\rea(B \to C)\zeta)(a)$ means 
$ a:\tau(C) \land  (\reah(B) \to (\ire{a}{C})\zeta)$,
and $(\rea{(B \to C^-)}\theta)'(a)$ means 
$\forall d \in\ddata(a)\,(d:\tau(C^-)\land(\reah(B) \to (\ire{d}{C^-})\theta))$.
Since, as one easily shows, $\ddata(a)\subseteq \tau(C^-)$
for $a: \tau(C)$,
the assertion holds by the structural induction hypothesis.

Assume $(\rea(\Set(B))\zeta)(\Amb(a,b))$ and $d\in\ddata(\Amb(a,b))$.
We have to show $(\rea(B^-)\theta)(d)$.
W.l.o.g.\ $a \ne \bot$, $d\in\ddata(a)$
and $(\rea(B)\zeta)(a)$.
By the structural induction hypothesis, $(\rea(B^{-})\theta)(d)$.

If $P$ is a predicate variable $X$, then (\ref{eq-faithfulness}) means
$\zeta(\reali{X})\subseteq (\theta(\reali{X}))'$, 
which holds by the assumption on $\zeta$ and $\theta$. 

For the case $\mu\,\Phi$, where $\Phi = \lambda X\, Q$, we have
$\rea(\mu\,\Phi)(\zeta)= \mu\,(\rea(\Phi)\,\zeta)$ and 
$\rea((\mu\,\Phi)^{-})\theta = \mu\,(\rea(\Phi^{-})\,\theta)$ 
(where $\Phi^- \eqdef \lambda X\,Q^-$).
Therefore, we have to show 
$\mu\,(\rea(\Phi)\,\zeta)\subseteq P'$ where 
$P := \mu\,(\rea(\Phi^{-})\,\theta)$. Hence, by s.p.\ induction, 
it suffices to show $(\rea(\Phi)\,\zeta)\,P' \subseteq P'$.
Since $\mu\,\Phi$ is at a non-F-position, every free
occurrence of $X$ in $Q$ is at a non-F-position. Therefore, the predicate 
$Q$ satisfies the premises of (\ref{eq-faithfulness}) and hence the structural 
induction hypothesis applies to it.
%
\begin{eqnarray*}
(\rea(\Phi)\,\zeta)\,P' 
&\equiv& 
\rea(Q)\,(\zeta[\tilde{X} := P'])\\
&\stackrel{\mathrm{i.h.}}{\subseteq}&
(\rea(Q^-)\,(\theta[\tilde{X} := P]))'\\
&\equiv&
((\rea(\Phi^{-})\,\theta)\,P)'
\equiv
P'
%
\end{eqnarray*}

The case $\nu\,\Phi$ can be obtained by dualization, using the operation
\[P^*(\vec x,d) \eqdef 
                  \exists a\,(d\in\ddata(a)\land P(\vec x,a))\]
which satisfies the adjunction
\[P \subseteq Q' \Leftrightarrow P^* \subseteq Q.\]
%
\end{proof}
\bigskip



\noindent
\textbf{Lemma~\ref{lem-gclosure}}.
Assume $\G(x)$. Then:
\begin{itemize}
\item[(1)] $\G(\tent(x))$, $\G(|x|)$, and $\G(-x)$;
\item[(2)] if $x \ge 0$, then $\G(2x-1)$ and $\G(1-x)$;
\item[(3)] if $|x|\le 1/2$, then $\G(2x)$.
\end{itemize}
\begin{proof}
This follows directly from the definition of $\G$ and 
elementary properties
of the tent function (recall $\tent(x)=1-|2x|$).
The extracted programs consist of simple manipulations of 
the given digit stream realizing $\G(x)$, concerning only its tail and  
first two digits. No nondeterminism is involved.

We only use the fixed point property of $\G$, namely that
for all $y$,  $\G(y)$ is equivalent to 
$|y|\le 1 \land \D(y) \land \G(\tent(y))$.

This equivalence has computational content which will show up in the programs
extracted from the proofs below: If it is used from left to right a 
stream (the realizer of $\G(y)$) is split into its head (realizer of $\D(y)$)
and tail (realizer of $\G(\tent(y))$). Using it
from right to left corresponds to the converse operation of adding 
a digit to a stream.

Proofs of (1-3):

(1) follows directly from the equivalence above and the fact that
$\tent(x)=\tent(-x)$. 
The three extracted programs are ${\sf f}_1(d:p) = p$, 
${\sf f}_2(d:p) = 1:p$, and 
${\sf f}_3(p) = \nh\ p$ where $\nh\ (d:p) = \nnot\ d: p$.

(2) Assume in addition $x\ge 0$. Then $\tent(x) = 1 - 2x$. 
Since $\G(\tent(x))$ and, by (1), $\G$ is closed under negation, we have
$\G(2x-1)$. 
Furthermore, since $0 \le 1-x\le 1$ we have $|1-x|\le 1$ and $\D(1-x)$.
Therefore, to establish $\G(1-x)$, it suffices to show $\G(\tent(1-x))$.
But $\tent(1-x) = 1-2(1-x)= 2x-1$ and we have shown $\G(2x-1)$ already.
The extracted programs are ${\sf f}_4(d:p) = \nh\ p$ and 
${\sf f}_5(d:p) = 1:\nh\ p$.

(3) Now assume $|x|\le 1/2$. Then $1-|2x| \ge 0$ and we have 
$\G(1-|2x|)$ (since $\G(x)$). Therefore, by (2), $\G(|2x|)$. 
Hence $\G(\tent(|2x|))$ and therefore also $\G(\tent(2x))$ since
$\tent(2x)=\tent(|2x|)$. Since $|x|\le 1/2$ implies $|2x|\le 1$ and 
$\G(x)$ implies $\D(x)$ and hence $\D(2x)$, it follows $\G(2x)$.
The extracted program is ${\sf f}_6(d:p) = d:1:\nh\ p$.

%
\end{proof}

\section{Implementation}
\label{Sec:appendix-program}
We explain the program and experiments of Section~7 in more detail.  The source code (GraySD.hs) is avaliable form the repository~\cite{githubUB}.


\subsection{Nondeterminism}
\label{sub-program-nondet}
Using the primitives of the Haskell libraries
\verb|Concurrent| and \verb|Exception|
we can implement nondeterministic choice through a program
\verb|ambL| that picks from a list nondeterministically
a terminating element (if exists). Although in our application 
we need only binary choice, we implement arbitrary finite choice
since it is technically more convenient and permits more applications,
e.g.\ Gaussian elimination (Section~8.4).
\begin{verbatim}
import Control.Concurrent
import Control.Exception

ambL :: [a] -> IO a
ambL xs = 
  do { m <- newEmptyMVar ; 
       acts <- sequence 
                 [ forkIO (do { y <- evaluate x ; putMVar m y }) 
                    | x <- xs ] ;
       z <- takeMVar m ;        
       x <- sequence_ (map killThread acts) ;
       seq x (return z)
     }
\end{verbatim}
Comments:
\begin{itemize}
\item \verb|newEmptyMVar| creates an empty mutable variable, 
\item \verb|forkIO| creates a thread,
\item \verb|evaluate| evaluates its argument to head normal form,  
\item \verb|putMVar m y| writes \verb|y| into the mutable variable \verb|m| 
       provided \verb|m| is empty,
\item the line \verb|seq x (return z)| makes sure that the threads are killed 
      before the final result \verb|z| is returned. 
\end{itemize}

\subsection{Extracting data}
\label{sub-program-data}
We define the domain $D$ (Section~2) and a program \verb|ed| on $D$
(`extract data') that, using \verb|ambL|,  nondeterministically selects 
a terminating argument of the constructor Amb.
\begin{verbatim}
data D = Nil | Le D | Ri D | Pair(D, D) | Fun(D -> D) | Amb(D, D)

ed :: D -> IO D
ed (Le d) = do { d' <- ed d ; return (Le d') }
ed (Ri d) = do { d' <- ed d ; return (Ri d') }
ed (Pair d e) = do { d' <- ed d ; e' <- ed e ; return (Pair d' e') }
ed (Amb a b) = do { c <- ambL [a,b] ; ed c } ;
ed d = return d
\end{verbatim}
\verb|ed| can be seen as an implementation of the operational semantics 
in Section~3.

\subsection{Gray code to Signed Digit Representation conversion}
\label{sub-program-gray}
We read-off the programs extracted in the Sections~5 and~6
to obtain the desired conversion function.
Note that this is nothing but a copy of the programs in those sections
  with type annotations for readability. The programs work without
  type annotation because Haskell infers their types.
The Haskell types contain only one type $D$.
Their types as CFP-programs are shown as comments in the code below. 

\paragraph{From Section~5.}
\begin{verbatim}
mapamb :: (D -> D)  -> D -> D  -- (B -> C) -> A(B) -> A(C)
         -- (A(B) is the type of Amb(a,b) where a,b are of type B)
mapamb = \f -> \c -> case c of {Amb(a,b) -> Amb(f $! a, f $! b)}

leftright :: D -> D   -- B + C -> B + C
leftright = \b ->  case b of {Le _ -> Le Nil; Ri _ -> Ri Nil}

conSD :: D -> D    -- 2 x 2 -> A(3) 
         -- (2 = 1+1, etc. where 1 is the unit type)
conSD = \c -> case c of {Pair(a, b) ->
      Amb(Le $! (leftright a), 
          Ri $! (case b of {Le _ -> bot; Ri _ -> Nil}))}
\end{verbatim}

\paragraph{From Section~6.}
\begin{verbatim}
gscomp :: D -> D  -- [2] -> A(3) 
gscomp (Pair(a, Pair(b, p))) = conSD (Pair(a, b))

onedigit :: D -> D -> D   -- [2] -> 3 -> 3 x [2]
onedigit  (Pair(a, Pair (b, p))) c = case c of {
       Le d -> case d of {
              Le _ -> Pair(Le(Le Nil), Pair(b,p));
              Ri _  -> Pair(Le(Ri Nil), Pair(notD b,p))
        };
        Ri _  -> Pair(Ri Nil, Pair(a, nhD p))}

notD :: D -> D   -- 2 -> 2
notD a = case a of {Le _ -> Ri Nil; Ri _ -> Le Nil}

nhD :: D -> D  -- [2] -> [2]
nhD (Pair (a, p)) = Pair (notD a, p)

s :: D -> D    -- [2] -> A(3 x [2])
s p = mapamb (onedigit p) (gscomp p)

mon :: (D -> D) -> D -> D   -- (B -> C) -> A(3 x B) -> A(3 x C)
mon f p = mapamb (mond f) p  
   where  mond f (Pair(a,t)) = Pair(a, f t)

gtos :: D -> D    -- [2]  -> [3] 
gtos = (mon gtos) . s
\end{verbatim}

\subsection{Gray code generation with delayed digits}
\label{sub-program-delay}
Recall that Gray code has the digits $1$ and $ -1$, modelled
as \verb|Ri Nil| and \verb|Le Nil|. A digit may as well be undefined ($\bot$) 
in which case it is modelled by a nonterminating computation (such as \verb|bot| below).
To exhibit the nondeterminism in our programs we generate digits with different
computation times. For example, \verb|graydigitToD 5| denotes the digit $1$ computed in
$500000$ steps, while \verb|graydigitToD 0| does not terminate and therefore 
denotes $\bot$.
\begin{verbatim}
delay :: Integer -> D
delay n  | n > 1     = delay (n-1)
         | n == 1    = Ri Nil
         | n == 0    = bot
         | n == (-1) = Le Nil
         | n < (-1)  = delay (n+1)
bot = bot

graydigitToD :: Integer -> D  
graydigitToD a | a == (-1) = Le Nil
               | a == 1    = Ri Nil
               | True      = delay (a*100000)
\end{verbatim}


The function \verb|grayToD| lifts this to Gray codes, that is, 
infinite sequences of partial Gray digits 
represented as elements of $D$:
\begin{verbatim}
-- list to Pairs
ltop :: [D] -> D
ltop = foldr (\x -> \y -> Pair(x,y)) Nil

grayToD :: [Integer] -> D
grayToD = ltop . (map graydigitToD)
\end{verbatim}
For example, \verb|grayToD (0:5:-3:[-1,-1..])| denotes the Gray code
$\bot:1:-1:-1,-1,\ldots$ where the first digit does not terminate, the second digit (1)
takes 500000 steps to compute and the third digit (-1) takes 300000 steps.
The remaining digits (all $-1$) take one step each.

\subsection{Truncating the input and printing the result}
\label{sub-io}
The program $\gtos$ transforms Gray code into signed digit representation, 
so both, input and output are infinite. To observe the computation, 
we truncate the input to some finite approximation which $\gtos$ will
map to some finite approximation of the output. This finite output
is a nondeterministic element of $D$ 
(i.e.~it may contain the constructor \verb|Amb|) 
from which we then can extract nondeterministically a deterministic 
data using the function \verb|ed| which can be printed.

In the following we define the truncation and the printing 
of deterministic finite data.

\paragraph{Truncating $d\in D$ at depth $n$.}
\begin{verbatim}
takeD :: Int -> D -> D
takeD n d | n > 0 = 
  case d of
    {
      Nil        -> Nil ;
      Le a       -> Le (takeD (n-1) a) ;
      Ri a       -> Ri (takeD (n-1) a) ;
      Pair(a, b) -> Pair (takeD (n-1) a, takeD (n-1) b) ;
      Amb(a,b)   -> Amb(takeD (n-1) a, takeD (n-1) b) ;
      Fun _      -> error "takeD _ (Fun _)" ;
    }
            | otherwise = Nil
\end{verbatim}

\paragraph{Showing a partial signed digit.}
\begin{verbatim}
dtosd :: D -> String
dtosd (Le (Ri Nil)) = " 1"
dtosd (Le (Le Nil)) = "-1"
dtosd (Ri Nil)      = " 0"
dtosd _             = " bot"
\end{verbatim}

\paragraph{Printing an element of $D$ that represents a finite deterministic 
signed digit stream.}
\begin{verbatim}
prints :: D -> IO ()
prints (Pair (d,e)) = putStr (dtosd d) >> prints e
prints Nil          = putStrLn ""
prints _ = error "prints: not a partial signed digit stream"
\end{verbatim}

\subsection{Experiments}
\label{sub-experiments}
As explained in Section~7,
there are three Gray codes of $0$:
\begin{eqnarray*}
a &=& \ \ \bot:1:-1,-1,-1,\ldots\\
b &=& \ \ \, 1:1:-1,-1,-1,\ldots\\
c &=& -1:1:-1,-1,-1,\ldots
\end{eqnarray*}
and the set of signed digit representations of $0$ is
$A \cup B \cup C$ where
\begin{eqnarray*}
A &=& \{0^\omega\}\\
B &=& \{0^k\!:\!1\!:\!(-1)^\omega \mid k \geq 0\}\\
C &=& \{0^k\!:\!(-1)\!:\!1^\omega \mid k \geq 0\}. 
\end{eqnarray*}
%
Our \verb|gtos| program 
nondeterministically produces 
an element of $A$ for input $a$,
an element of $A \cup B$ for input $b$, 
and an element of $A \cup C$ for input $c$.
As the following results show, 
the obtained value depends on 
the speed of computation of the individual Gray-digits.

\bigskip

Input $b$:
\begin{verbatim}
*GraySD> ed (takeD 50 (gtos (grayToD (1:1:[-1,-1..])))) >>= prints
 1-1-1-1-1-1-1-1-1-1-1-1-1-1-1-1-1-1-1-1-1-1-1-1 bot
\end{verbatim}

Input $c$:
\begin{verbatim}
*GraySD> ed (takeD 50 (gtos (grayToD (-1:1:[-1,-1..])))) >>= prints
-1 1 1 1 1 1 1 1 1 1 1 1 1 1 1 1 1 1 1 1 1 1 1 1 bot
\end{verbatim}

\bigskip

Input $a$ (demonstrating that the program can cope with an undefined digit):
\begin{verbatim}
*GraySD> ed (takeD 50 (gtos (grayToD (0:1:[-1,-1..])))) >>= prints
 0 0 0 0 0 0 0 0 0 0 0 0 0 0 0 0 0 0 0 0 0 0 0 0 bot
\end{verbatim}

\bigskip

Input $b$ with delayed first digit:
\begin{verbatim}
*GraySD> ed (takeD 50 (gtos (grayToD (2:1:[-1,-1..])))) >>= prints
 0 0 1-1-1-1-1-1-1-1-1-1-1-1-1-1-1-1-1-1-1-1-1-1 bot
\end{verbatim}

\bigskip

Same, but with more delayed first digit:
\begin{verbatim}
*GraySD> ed (takeD 50 (gtos (grayToD (10:1:[-1,-1..])))) >>= prints
 0 0 0 0 0 0 0 0 0 1-1-1-1-1-1-1-1-1-1-1-1-1-1-1 bot
\end{verbatim}

\bigskip

Input $1,1,1,\ldots$ which is the Gray code of  $2/3$:
\begin{verbatim}
*GraySD> ed (takeD 50 (gtos (grayToD ([1,1..])))) >>= prints
 1-1 1-1 1-1 1-1 1-1 1-1 1 0-1-1 1-1 1-1 1-1 1-1 bot
\end{verbatim}

\bigskip

Same, but with delayed first digit:
\begin{verbatim}
*GraySD> ed (takeD 50 (gtos (grayToD (2:[1,1..])))) >>= prints
 0 1 1-1 1-1 1-1 1-1 0 1 1-1 1-1 1-1 1-1 1-1 1-1 bot
\end{verbatim}

To see that the last two results are indeed approximations of signed digit 
representations of $2/3$, one observes that in the signed digit 
representation \verb|0 1| means the same as \verb|1-1| ($0+1/2 = 1-1/2$), 
so both results are equivalent to
\begin{verbatim}
 1-1 1-1 1-1 1-1 1-1 1-1 1-1 1-1 1-1 1-1 1-1 1-1 bot
\end{verbatim}
which denotes $2/3$.

\bigskip

Note that since our experiments use the nondeterministic program \verb|ed|,
the results obtained with a different computer may differ from the ones 
included here.
Our theoretical results ensure that, whatever the results are, 
they will be correct.

\end{document}